\newcommand{\bfe}{{\bf e}}
\newcommand{\bfn}{{\bf n}}
\newcommand{\bft}{{\bf t}}
\newcommand{\bfx}{{\bf x}}
\newcommand{\bfy}{{\bf y}}
\newcommand{\bfI}{{\bf I}}
\newcommand{\bfR}{{\bf R}}
\newcommand{\bfU}{{\bf U}}
\newcommand{\beq}{\begin{equation}}
\newcommand{\eeq}{\end{equation}}
\newcommand{\beqs}{\begin{eqnarray}}
\newcommand{\eeqs}{\end{eqnarray}}
\newcommand{\calI}{{\cal I}}
\newcommand{\calS}{{\cal S}}
\newtheorem{theorem}{Theorem}[section]
\newtheorem{lemma}{Lemma}[section]
\DeclareMathOperator{\sign}{sign}
\title{A continuum mechanics approach for the deformation of non-Euclidean origami generated by piecewise constant nematic director fields}
\author[1]{Linjuan Wang}
\author[2,*]{Fan Feng}
\affil[1]{School of Astronautics, Beihang University, Beijing 100191, China}
\affil[2]{School of Mechanics and Engineering Science, Peking University, Beijing 100871, China}
\affil[*]{Email: fanfeng@pku.edu.cn}
\begin{document}
	
	\maketitle 
	
	\begin{abstract}
We merge classical origami concepts with active actuation by designing origami patterns whose panels undergo prescribed metric changes. These metric changes render the system non-Euclidean, inducing non-zero Gaussian curvature at the vertices after actuation. Such patterns can be realized by programming piecewise constant director fields in liquid crystal elastomer (LCE) sheets. In this work, we address the geometric design of both compatible reference director patterns and their corresponding actuated configurations. On the reference configuration, we systematically construct director patterns that satisfy metric compatibility across interfaces. {We prove the existence and uniqueness of compatible director fields at a vertex for the generic case, up to orthogonal duals. The Gaussian curvature of the actuated vertex is computed based on the compatible director fields.}
On the actuated configuration, we develop a continuum mechanics framework to analyze the kinematics of non-Euclidean origami. In particular, we fully characterize the deformation spaces of three-fold and four-fold vertices and establish analytical relationships between their deformations and the director patterns. Building on these kinematic insights, we propose rational designs of large director patterns: one based on a quadrilateral tiling with alternating positive and negative actuated Gaussian curvature, and the other combining three-fold and four-fold vertices governed by a folding angle theorem. Remarkably, both designs achieve compatibility in both the reference and actuated states. {We also propose a design strategy for active metamaterials based on the periodic non-Euclidean origami. The active metamaterials can have two modes of motions by folding or stimulating.} We anticipate that our geometric framework will facilitate the design of non-Euclidean/active origami structures and broaden their application in active metamaterials, soft actuators, and robotic systems.
	\end{abstract}
	

\section{Introduction}

Shape programming has emerged as a fundamental topic across diverse fields, including robotics, metamaterials, soft electronics, and geometric design \cite{choi2019programming,filipov2015origami,dang2025kirigami}. Among various shape programming techniques, origami—the ancient art of paper folding—offers a promising platform for designing complex configurations through coordinated folding lines \cite{Levi2016Programming, lang1996computational, lang2011origami,misseroni2024origami}. Since paper is nearly inextensible, traditional origami involves isometric deformations that preserve distances within each panel. As a result, the surfaces remain developable, with zero Gaussian curvature, and the design space becomes highly constrained. The resulting nonlinearity and over-constrained geometry make many origami-related problems computationally challenging; for instance, determining rigid-foldability with optional creases is known to be NP-hard \cite{Akitaya_Demaine_Horiyama_Hull_Ku_Tachi_2020}.
Due to these challenges, much research focuses on special classes of origami, such as rigidly and flat-foldable quadrilateral mesh origami \cite{feng2020designs}, where inverse design frameworks have been developed \cite{dang2022inverse}. Additional work on rigid foldability includes \cite{tachi2009generalization,lang2018rigidly,izmestiev2017classification,hayakawa2024panel}, offering accurate predictions for actuation, though typically requiring external control mechanisms \cite{ai2024easy}.

An alternative approach to shape programming is metric design using responsive or functional materials. Systems inspired by biological growth \cite{huang}, swelling gels \cite{HONG20093282}, or phase-transforming liquid crystal elastomers (LCEs) \cite{warner2007liquid} can induce programmable metric changes (i.e., length changes) across a 2D domain. These metric changes directly influence the intrinsic geometry—such as Gaussian curvature—of the actuated surface \cite{klein, mostajeran2015curvature,mostajeran2016encoding}. Several inverse design frameworks have emerged, including patterning LCEs \cite{aharoni2014geometry,warner2018nematic,itay2019curved,griniasty2021multivalued}, designing active nets \cite{aharoni2018universal}, and controlling shape via pneumatic actuation \cite{siefert2019bio}. The metric-induced curvature can also enhance the mechanical strength of the structure \cite{guin2018layered}. These materials offer additional benefits such as remote control via light or magnetic fields \cite{kim2018printing}, and their softness and biocompatibility make them suitable for integration in biomedical systems \cite{sekitani2016ultraflexible}.
However, the main limitation of the metric design approach is that it prescribes only the in-plane geometry (the first fundamental form); the out-of-plane shape (second fundamental form) must still be determined via minimization of a bending energy functional. This often results in flexible systems with highly implicit, energetically-selected deformations.

\begin{figure}[h]
	\centering
	\includegraphics[width=\textwidth]{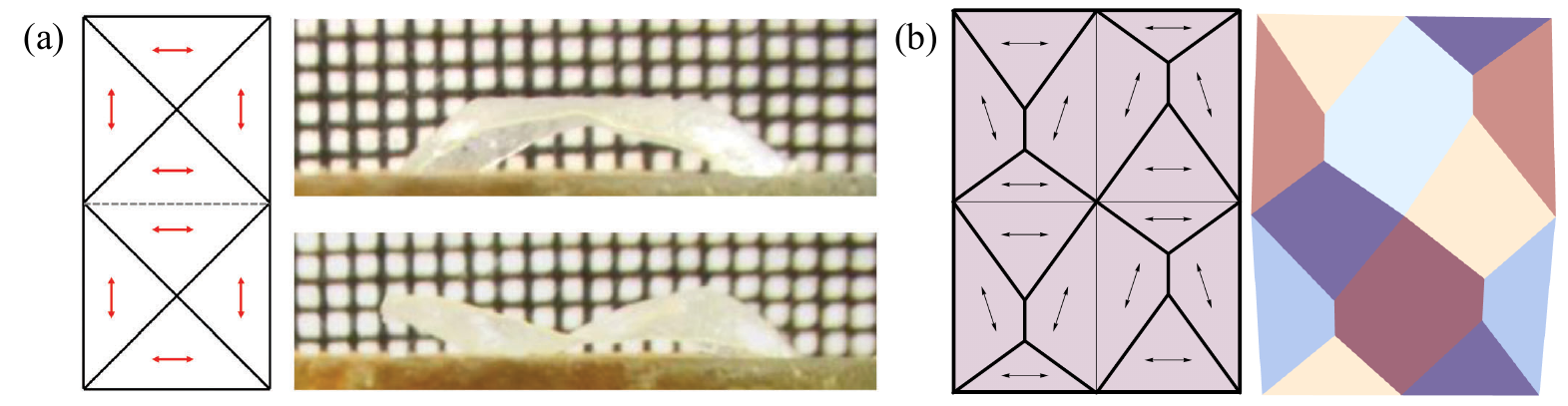}
	\caption{(a) An active origami adapted from \cite{plucinsky2018patterning}. (b) A generalized design composed of three-fold and four-fold vertices with no symmetry.}
	\label{fig:intro}
\end{figure}
Combining origami geometry with metric actuation seems to provide a new paradigm, potentially enabling shape-morphing systems that are both predictable, functionally programmable and remotely controllable. {In this work, the in-plane non-isometric deformation induces non-zero Gaussian curvature at the vertices in the deformed configuration, while the out-of-plane deformation is geometrically restricted by folding.} A seminal example is provided by Plucinsky et al. \cite{plucinsky2016programming,plucinsky2018patterning} (also shown in Fig.~\ref{fig:intro}), where origami panels are augmented with panels patterned by piecewise-constant LCE director fields. Upon actuation, the panels elongate along the director by a factor $\lambda_{\parallel}$ and contract transversely by $\lambda_{\perp}$. Although each panel remains flat, the actuation induces non-isometric deformations, giving rise to non-trivial Gaussian curvature at the vertices \cite{o2006elementary,duffy2020defective}. The resulting configuration often bifurcates into distinct “up” and “down” states due to the induced curvature. While the experiments validate theoretical predictions, their design space remains limited due to symmetry constraints, limiting applicability in scenarios requiring more complex actuation. We also find that the systematic study of active non-Euclidean origami is quite limited in the literature.
To enable more versatile functionality, it is crucial to develop a generalized design framework for active origami (also generically resulting in non-Euclidean origami \cite{waitukaitis2019non}) that can systematically handle director pattern design and accurately predict deformation behavior. 

In this paper, we focus on the geometric and kinematic aspects of such a framework.
We restrict our attention to a purely geometric regime, assuming that each panel undergoes a uniform, piecewise-constant metric change prescribed by the director field, and remains flat and rigid after actuation. Within this framework, two main challenges arise. First, the reference director pattern must be compatible, meaning the prescribed length changes must match across each crease. A simple case occurs when the crease is the bisector of adjacent directors or orthogonal to it—a condition referred to as metric compatibility \cite{feng2020evolving}. However, patterning compatible director fields becomes nontrivial when accounting for topological and geometric interactions among multiple creases. For example, it is not obvious that a compatible director fields at a vertex exists.
Second, the kinematics of the actuated configuration must be predicted accurately. Ideally, the kinematics of the actuated shape can be predicted directly from the reference director pattern. A key question remains: even when the reference director pattern is compatible, does the actuated configuration maintain compatibility?

We address these challenges by first analyzing the fundamental conditions for metric compatibility, proving the existence of the compatible director fields at a general $n$-fold vertex, and then studying the kinematics of actuated three-fold and four-fold intersections.
{By satisfying the metric compatibility condition, the length changes of the crease are identical from both sides. Under this condition, we prove that compatible director fields always exist at a $n$-fold vertex. They are characterized as the generic case with two determined director fields which are orthogonal to each other (orthogonal dulas) and the degenerate case with a continuous distribution of the director fields.}
We then generalize previous continuum mechanics approaches \cite{feng2020designs, zou2024kinematics} to derive closed-form kinematic equations for these non-Euclidean vertices. Specifically, we show that the actuated three-fold vertices are rigid with no free folding angle and are fully determined by the director configuration, while the four-fold vertices have one free folding angle, with the remaining angles following from a unified analytical formula.
Using these insights, we investigate multi-node interactions. For example, a pattern with two connected three-fold vertices reveals hidden degrees of freedom in director programming. For larger-scale patterns, we propose a marching algorithm to design quadrilateral director tilings with alternating positive and negative actuated Gaussian curvature. Additionally, we construct another class of compatible patterns composed of both three-fold and four-fold vertices (Fig.~\ref{fig:intro}(b)), governed by a folding angle theorem. Surprisingly, despite the complexity of these systems, the active origami patterns are compatible in both the reference and actuated configurations and offer substantial design flexibility. {Moreover, building on the periodic quadrilateral pattern, we propose a design for active metamaterials that can deform through folding and external stimulation. These active metamaterials have the potential to be coarse-grained to have a new shell/plate theory \cite{mcinerney2025coarse, li2025nonlinear, XU2024105832}.}

This paper is organized as follows. Section 2 introduces the deformation of director patterns and the conditions for metric compatibility. Section 3 studies the patterning of
compatible director fields at a general $n$-fold vertex.
Sections 4 and 5 analyze the kinematics of three-fold and four-fold vertices, respectively. Section 6 presents two design strategies for large-scale director fields and their actuated configurations. A design strategy for active metamaterials is also proposed.
Finally, Section 7 concludes the paper and discusses future research directions.

{\bf Notation.} $\{\bfe_1, \bfe_2, \bfe_3\}$ is the standard right-handed orthonormal basis for $\mathbb{R}^3$. $\bfR_{\bfe_3}(\theta)$ is a rotation tensor with  rotation axis $\bfe_3$ and  rotation angle $\theta$. The sector angle is denoted as $\alpha_i$ on the reference domain and $\tilde{\alpha}_i$ on the actuated domain respectively.  

\section{Deformation of constant director fields and metric compatibility condition} \label{sec:metric_compatibility}
We first consider the deformation induced by actuating a director pattern
\beq 
\bfn(\bfx) = \cos\theta(\bfx) \bfe_1 + \sin\theta(\bfx) \bfe_2,~ \bfx \in \mathbb{R}^2
\eeq
in the $\bfe_1,\bfe_2$ plane,
where $\bfn(\bfx)$ is a unit vector. Upon actuation, the pattern contracts along the director $\bfn$ by a factor $\lambda_{\parallel}$ and elongates along the perpendicular direction $\bfn_\perp$ by a factor $\lambda_{\perp}$. Here $\bfn_\perp$ is also called the orthogonal dual of $\bfn$ satisfying $\bfn_\perp \cdot \bfn =0$. Specifically, in liquid crystal elastomers/glasses, the  contraction factor $\lambda_{\parallel} = \lambda$ is $\sim 0.7$ and the elongation factor $\lambda_{\perp} = \lambda^{-\nu}$, with the {\it opto-thermal Poisson's ratio} $\nu=1/2$ for incompressible elastomers and as high as $\nu=2$ for glasses. Encoding the contraction and elongation, one can write the deformation gradient of the actuation as 
\beq
\bfU_{\bfn} = \lambda_{\parallel} \bfn \otimes \bfn + \lambda_{\perp} \bfn_\perp \otimes \bfn_\perp + \bfe_3 \otimes \bfe_3, \label{eq:gradient}
\eeq
which implies $\bfU_{\bfn} \bfn = \lambda_{\parallel} \bfn$ and $\bfU_{\bfn} \bfn_\perp = \lambda_{\perp} \bfn_\perp$.
Notice that $\bfU_{\bfn}$ is invariant under the transformation $\bfn \rightarrow -\bfn$, which means the director $\bfn$ is actually double-arrowed, and changing $\bfn$ to $-\bfn$ results in the same pattern and actuation. 
\begin{figure}[h]
	\centering
	\includegraphics[width=\textwidth]{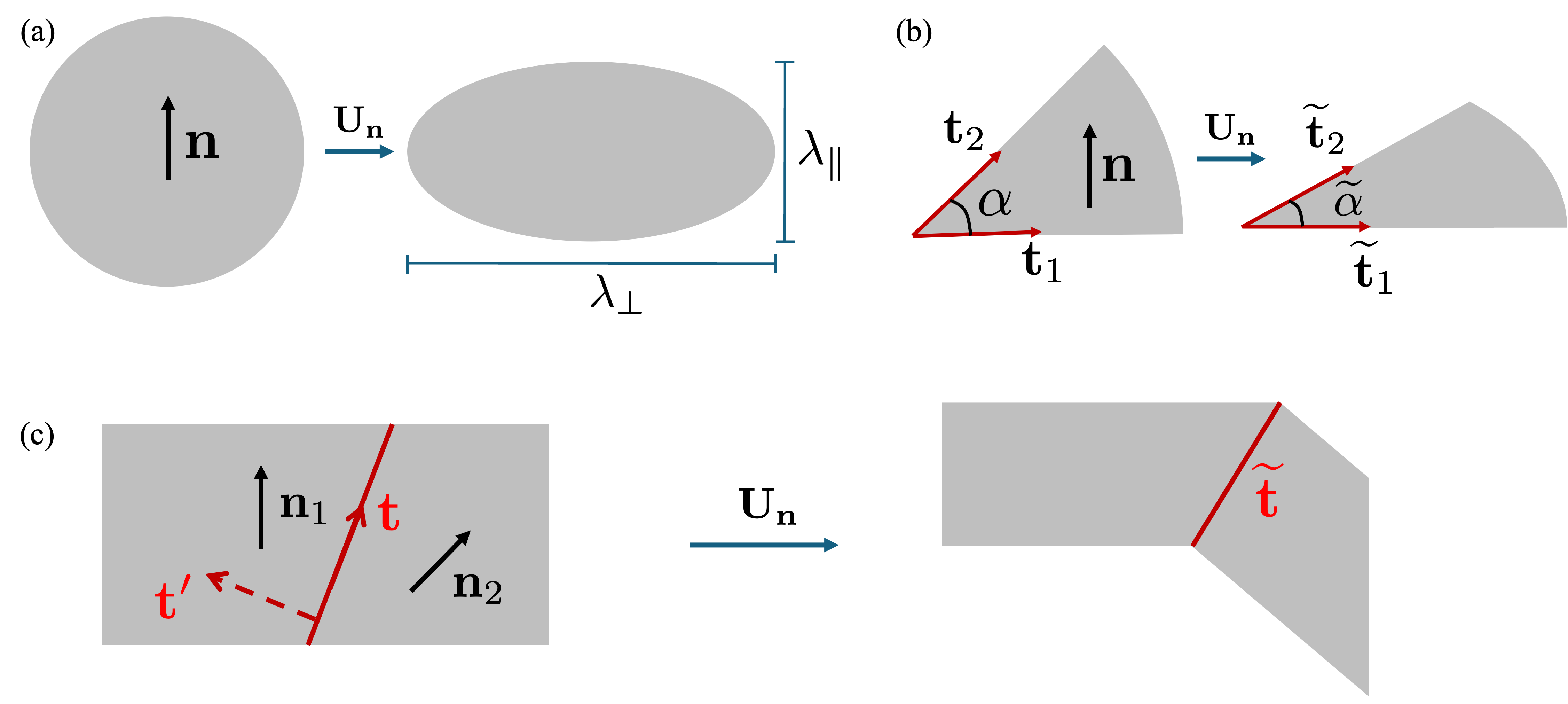}
	\caption{(a) A disc with constant director is deformed into an ellipse. (b) A sector angle $\alpha$ is deformed to $\tilde{\alpha}$. (c) A metric compatible interface with tangent $\bft$ (or $\bft'$) between two directors $\bfn_1$ and $\bfn_2$.}
	\label{fig:angle}
\end{figure}
The deformation of a constant director pattern is shown in Fig.~\ref{fig:angle}(a), where a unit disc is deformed into an ellipse with long axis $\lambda_\perp$ and short axis $\lambda_\parallel$ under $\bfU_{\bfn}$. Accordingly, the change of sector angles for a circular sector before and after actuation can be computed. The sector angle $\alpha\in(0,\pi)$ and two unit side vectors $\bft_1$, $\bft_2$ on the reference domain are illustrated in Fig.~\ref{fig:angle}(b). Under the deforamtion gradient $\bfU_{\bfn}$, the reference and deformed sector angles $\alpha$ and  $\tilde{\alpha}$ are given by
\beq
\alpha =\alpha(\bft_1, \bft_2) = \arccos(\bft_1 \cdot \bft_2), \quad
\tilde{\alpha} = \tilde{\alpha}(\bft_1, \bft_2, \bfn) = \arccos\left(\frac{\bfU_{\bfn} \bft_1 \cdot \bfU_{\bfn}\bft_2}{|\bfU_{\bfn} \bft_1| |\bfU_{\bfn} \bft_2|}\right) 
\label{eq:sector_angle}
\eeq
respectively. 

Next, we consider the metric-compatible interface with tangent $\bft$ between the two constant director fields $\bfn_1$ and $\bfn_2$ satisfying $\bfn_1 \nparallel \bfn_2$. The interface will serve as the crease of origami in the following sections. If we assume that the  interfaces are compatible before and after actuation (the lengths are identical from the two sides) and $\lambda_{\parallel,\perp}$ are uniform throughout the entire pattern, one can obtain the metric-compatibility condition
\beq
|\bfU_{\bfn_1} \bft| = |\bfU_{\bfn_2} \bft| \Leftrightarrow (\bfn_1 \cdot \bft)^2 = (\bfn_2 \cdot \bft)^2,  \label{eq:metric}
\eeq
which yields the unit tangent $\bft$ as
\beq
\bft =\frac{\sigma \bfn_1 + \tilde{\sigma} \bfn_2}{|\sigma \bfn_1 + \tilde{\sigma} \bfn_2|}, \quad \sigma, \tilde{\sigma} \in \{ \pm 1 \}. \label{eq:tangent}
\eeq
Figure~\ref{fig:angle}(c) shows an example with $\bft = (\bfn_1 + \bfn_2)/|\bfn_1 +\bfn_2|$ on the reference domain. By satisfying the condition (\ref{eq:tangent}), the deformed interfaces can be joined together with tangent $\tilde{\bft}$ by appropriate rigid motions. It should be noted that, given the two director fields, the metric-compatible interface is either a bisector or a perpendicular dual to the bisector ($\bft$ and $\bft'$ in Fig.~\ref{fig:angle}(c)).

{
\section{Patterning a compatible $k$-fold director vertex} \label{sec:nfold}
We first consider a general $k$-fold intersection in 2D, where $k$ distinct director fields $\bfn_1,\dots, \bfn_k$ ($\bfn_i \nparallel \bfn_j$ if $i\neq j$) are separated by $k$ creases $\bft_1,\dots,\bft_k$. At each crease, the metric compatibility is satisfied as $|\bfU_{\bfn_i} \bft_i| = |\bfU_{\bfn_{i-1}} \bft_i|$. In the following, we provide two patterning strategies, prescribing either the director fields or the creases. Then we compute the concentrated Gaussian curvature after actuation based on the 2D director pattern.

\subsection{Two patterning strategies}
Suppose that $k$ director fields $\bfn_1,\dots, \bfn_k$ ($\bfn_i \nparallel \bfn_j$ if $i\neq j$)  are prescribed. We design creases $\bft_i$ between $\bfn_i$ and $\bfn_{i-1}$ which satisfies the metric compatibility. As discussed above, such $\bft_i$ can be written as Eq.~(\ref{eq:tangent}), meaning that $\bft_i$ can bisect the two adjacent directors or be perpendicular to the bisector. However, the branches $\sigma$ and $\tilde{\sigma}$ cannot be arbitrarily chosen.  As shown in Fig.~\ref{fig:nfold}, the creases must satisfy the 
topological condition
\beq
\sign[(\bft_i \times \bft_{i+1}) \cdot \bfe_3] = +1,\quad {\text{for } i=1,2,...,k} \label{eq:topolotial_condition}
\eeq
to preserve the counterclockwise sequence of $\bft_i$. The topological condition itself is subtle to be satisfied, as the branches $\sigma$ and $\tilde{\sigma}$ can not be explicitly implied by this condition. We have to substitute specific $\sigma$ and $\tilde{\sigma}$ and check the condition.
Thus, a more realistic strategy is to prescribe the creases and then distribute the director fields.

Now suppose that $k$ creases are given by $\bft_{i+1} = \bfR_{\bfe_3}(\alpha_1+\dots+\alpha_i) \bfe_1$ and $\bft_1=\bfe_1$, where $\alpha_i$ satisfies the topological condition $0<\alpha_i<\pi$ and $\alpha_1+\alpha_2+\dots+\alpha_k = 2\pi$. We first assume that a compatible director field $\bfn_i,i=1,2,\dots,k$ exists. An orthogonal dual to this director field is defined as a new field by replacing $\bfn_i$ with $\bfn_i^{\perp}=\bfR_{\bfe_3}(\pi/2) \bfn_i$ in the corresponding domains, as shown in Fig.~\ref{fig:nfold}(a). The following Lemma implies that a compatible director field's orthogonal dual is also compatible.

\begin{lemma}
 Suppose a $k$-fold  director field is compatible at each crease ($|\bft_i \cdot \bfn_{i-1}| = |\bft_i \cdot \bfn_i|$). Then the orthogonal dual is also compatible ($|\bft_i \cdot \bfR_{\bfe_3}(\pi/2)\bfn_{i-1}| = |\bft_i \cdot \bfR_{\bfe_3}(\pi/2)\bfn_i|$).    
\end{lemma}
\begin{proof}
    The proof is straightforward by projecting $\bfn_i,\bfn_{i-1}$ onto $\bft_i$ and $\bft_i^{\perp}$ as $\bfn_{i-1} = a_1 \bft_i + b_1 \bft_i^\perp, \bfn_{i} = a_2 \bft_i + b_2 \bft_i^\perp$. The compatibility $|\bft_i \cdot \bfn_{i-1}| = |\bft_i \cdot \bfn_i|$ yields $|a_1|=|a_2|$. Since $\bft_i$ and $\bft_{i-1}$ are both unit vectors, we then have $|b_1|=|b_2|$, which is the compatibility $|\bft_i \cdot \bfR_{\bfe_3}(\pi/2)\bfn_{i-1}| = |\bft_i \cdot \bfR_{\bfe_3}(\pi/2)\bfn_i|$ for the orthogonal dual. 
\end{proof}

\begin{figure}[h]
	\centering
	\includegraphics[width=\textwidth]{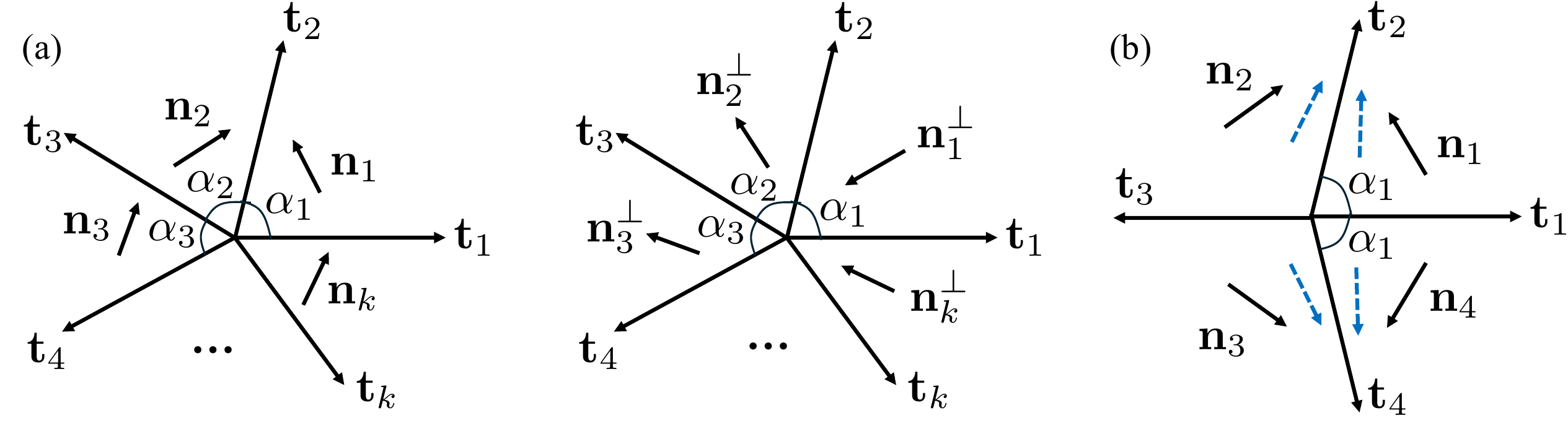}
	\caption{(a) An n-fold director field and its orthogonal dual for the generic case. The solution for $\bfn_i$ is unique up to orthogonal duals. (b) A degenerate case that has a continuous family of solutions for $\bfn_i$. }
	\label{fig:nfold}
\end{figure}
The following theorem implies the existence and uniqueness of the compatible director fields for a given set of creases (generic case), up to orthogonal duals.
\begin{theorem}
    Suppose $k$ creases are prescribed by $\bft_1=\bfe_1$ and $\bft_i = \bfR_{\bfe_3}(\alpha_1+\alpha_2+\dots + \alpha_{i-1}) \bfe_1, i=2,\dots,k$, where the sector angles satisfy $0 <\alpha_i<\pi$ and $\alpha_1+\dots + \alpha_k = 2\pi$. Then generically there are two and only two compatible director fields, which are the orthogonal duals to each other, except for the degenerate cases.
\end{theorem}
\begin{proof}
    Without loss of generality, we have $\bft_1 = \bfe_1$. Let $\bfn_1 = \bfR_{\bfe_3} (\theta) \bfe_1, \theta \in (0,\pi)$. To satisfy the metric compatibility $|\bft_i\cdot \bfn_{i-1}| = |\bft_i \cdot \bfn_i|$, we systematically pattern the directors as $\bfn_{i+1} = \bfn_i - 2 (\bfn_i \cdot \bft_{i+1}) \bft_{i+1}$ for $i=1,\dots,k-1$. This will make the compatibility condition $\bft_i\cdot \bfn_{i-1} = -\bft_i \cdot \bfn_i$ to be satisfied at all creases except $\bft_1$. Thus we may check $error=(\bfn_k \cdot \bft_1)^2 - (\bfn_1 \cdot \bft_1)^2$ for the last compatibility at $\bft_1$. Notice that $\bfn_1$ is a function of $\theta$ and $\bft_i$ are given. Thus, the other directors can be derived sequentially, resulting in a function $error(\theta)$. Substituting $\theta$ and $\alpha_i$, a direct calculation yields
    \begin{align}
        error(\theta) = \begin{cases}
            -\sin[2(\alpha_1 + \alpha_3 + \dots + \alpha_{k-1})] \sin[2(\alpha_1 + \alpha_3 + \dots + \alpha_{k-1}+\theta)] \quad \text{for even k} \\
             -\sin[2(\alpha_2 + \alpha_4 + \dots + \alpha_{k-1})] \sin[2(\alpha_2 + \alpha_4 + \dots + \alpha_{k-1}+\theta)] \quad \text{for odd k}
        \end{cases}.
    \end{align}
Notice that $\alpha_1 + \alpha_2 + \dots + \alpha_k = 2\pi$. Thus $error(\theta) \equiv 0$ if the sum of the even indexed terms or the odd indexed terms in $\alpha_i$ is $\pi/2,\pi$ or $3\pi/2$, and $\theta$ can be any value in $(0,\pi)$ (degenerate case in Fig.~\ref{fig:nfold}(b)). Otherwise, $error(\theta) = 0$ has two solutions for $\theta \in (0,\pi)$. Meanwhile, we need to check $\bfn_k \times \bfn_1 \neq 0$ to make sure $\bfn_k$ is not parallel to $\bfn_1$, and $\bft_1$ is a crease. Again, by direct calculation, we have 
\begin{align}
    \bfn_k \times \bfn_1 = \begin{cases}
        (0,0,-\sin[2(\alpha_1 + \alpha_3 + \dots +\alpha_{k-1}+\theta)]) \quad \text{for even k} \\
        (0,0,-\sin[2(\alpha_2 + \alpha_4 + \dots +\alpha_{k-1})]) \quad \text{for odd k} 
    \end{cases}.
\end{align}
Taking these two conditions together, we conclude that
\begin{enumerate}

    \item Generic case: $\alpha_2 + \alpha_4 + \dots \neq \pi/2, \pi$ or $3\pi/2$. If $k$ is odd, there are two compatible patterns for $\bfn_i$ (by satisfying $\sin[2(\alpha_2 + \alpha_4 + \dots + \alpha_{k-1}+\theta)]=0$) which are orthogonal duals to each other, since $error(\theta)$ and $error(\theta \pm \pi/2)$ vanish simultaneously (Fig.~\ref{fig:nfold}(a)). If $k$ is even, we eventually get $\bfn_k \parallel \bfn_1$ by patterning a compatible director filed, which means that $\bft_1$ is not a crease and the even $k$ case degenerates into the odd $k$ case.

    \item Degenerate case: $\alpha_2 + \alpha_4 + \dots =\pi/2, \pi$ or $3\pi/2$. If the sum of the even indexed (or equivalently odd indexed) sector angles is $\pi/2, \pi$ or $3\pi/2$, say $\alpha_2 + \alpha_4 + \dots = \pi/2, \pi$ or $3\pi/2$, and $k$ is even, we have $\theta \in (0,\pi)$ is always a solution since $\sin[2(\alpha_1 + \alpha_3 + \dots + \alpha_{k-1})]=0$. The number of compatible patterns $\bfn_i$ is infinite. If $k$ is odd and $\alpha_2 + \alpha_4 + \dots = \pi/2, \pi$ or $3\pi/2$, we have $\bfn_k \times \bfn_1 =0$ ($\bfn_k \parallel \bfn_1$) and the case is degenerated into the even $k$ case.

\end{enumerate}
The results can be conveniently confirmed by using an alternative Householder transformation (matrix) approach in mirror reflections.
\end{proof}

\subsection{Concentrated Gaussian curvature at the vertex}
In this section, we study the Gaussian curvature concentrated at the vertex after deformation for a compatible director field. 
As discussed above, we only need to consider two cases: the generic case with odd $k$ and the degenerate case with even $k$. The former case has two solutions for compatible director patterns which are orthogonal duals, and the latter case possesses a continuous variation of director patterns. 

For the generic case with odd $k$, we have two solutions for $error(\theta)=0$ which are $\theta = \theta_0:= -\alpha_2-\alpha_4-\dots-\alpha_{k-1}$ and $\theta = \theta_0+\pi/2$. Then we have $\bfn_1 = \bfR_{\bfe_3}(\theta) \bfe_1$ and other $\bfn_i$ can be computed successively by  $\bfn_{i+1} = \bfn_i - 2 (\bfn_i \cdot \bft_{i+1}) \bft_{i+1}$.
Since the director we study is piecewise constant, the actuated pattern possesses non-zero Gaussian curvature at the tip and zero Gaussian curvature elsewhere. According to previous work \cite{feng2020evolving}, the concentrated Gaussian curvature at the tip is $G= 2\pi - \sum_{i=1}^k \tilde{\alpha}_i$, where $\tilde{\alpha}_i$ is the actuated sector angle given by Eq.~(\ref{eq:sector_angle}). In the following, we use $\beta_i = \sum_{j=1}^{i} \alpha_j$ for convenience, so that the creases can be written as $\bft_i = \bfR_{\bfe_3} (\beta_i) \bfe_1$.  Substituting the creases $\bft_i$ and the directors $\bfn_i$ with $\theta_0=-\alpha_2-\alpha_4 - \dots - \alpha_{k-1}$, we have the actuated Gaussian curvature at the tip as a function of $\alpha_i$:
\beqs
G&=&G_{odd}(k,\alpha_1,\dots,\alpha_k) \nonumber\\  &:=&
2\pi - \sum_{i=1}^k \arccos \frac{(r^2+1) \cos \alpha_i + (r^2 - 1) \cos \tilde{\beta}_i}{\sqrt{[(r^2+1) + (r^2-1)\cos(\alpha_i  - \tilde{\beta}_i)] [(r^2+1) + (r^2-1)\cos(\alpha_i  +\tilde{\beta}_i)]}} \label{eq:Gauss}\\
&&\text{where } \tilde{\beta}_i = 2\sum_{j=1}^{i-2} (-1)^{j+1} \beta_j + (-1)^i \beta_{i-1} + (-1)^i \beta_i + 2 \sum_{j=i+1}^{k-1}(-1)^j \beta_j, ~\beta_i = \sum_{j=1}^{i} \alpha_j \text{ and } r= \frac{\lambda_{\parallel}}{\lambda_{\perp}}. \nonumber
\eeqs
\begin{figure}[!t]
	\centering
	\includegraphics[width=0.8\textwidth]{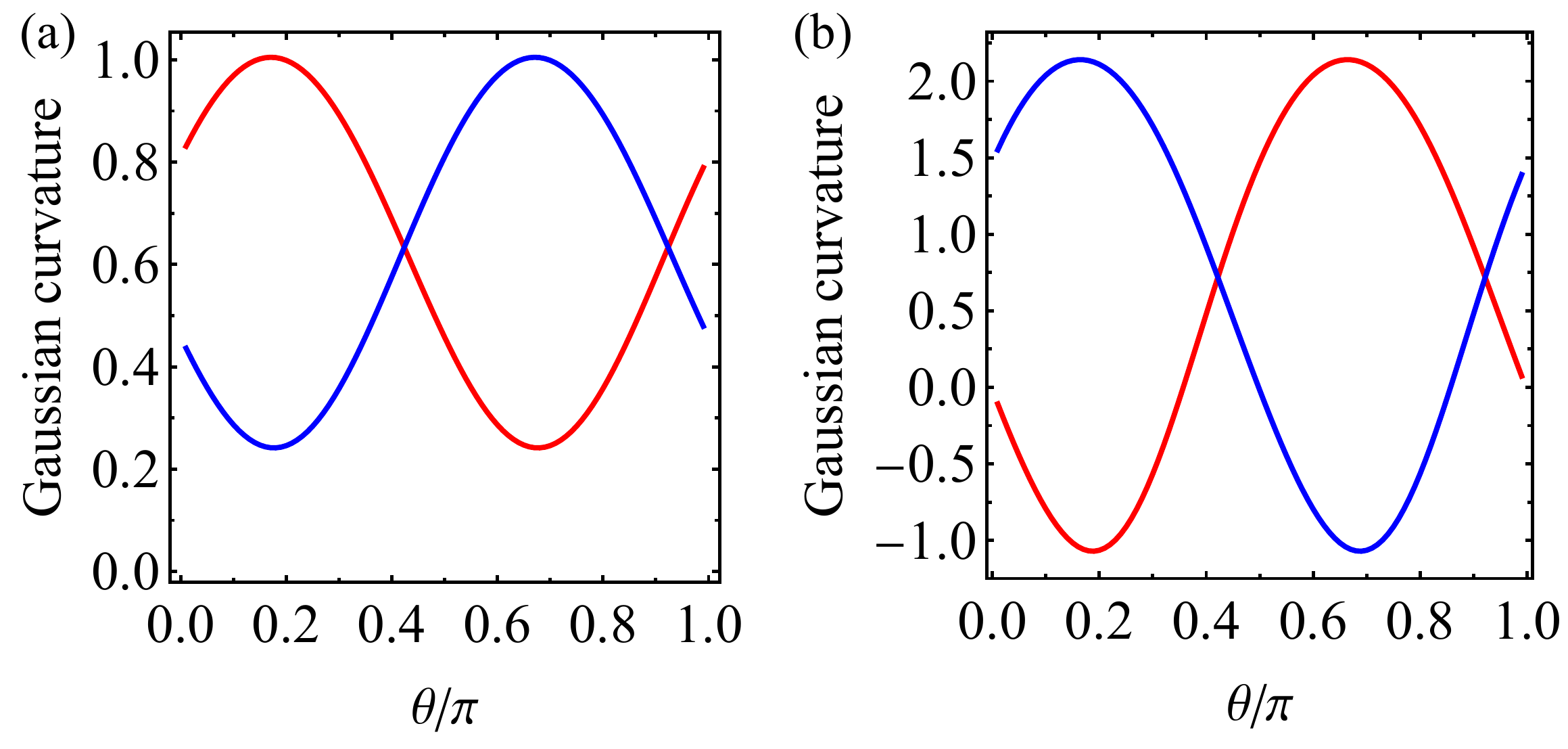}
	\caption{The Gaussian curvature of orthogonal duals (red and blue) for the degenerate case with (a) $k=6, r=0.9$, $ (\alpha_1,\alpha_2,\alpha_3,\alpha_4,\alpha_5)=2\pi(0.2,0.13,0.07,0.2,0.23)$ and (b) $k=6, r=2/3$, $ (\alpha_1,\alpha_2,\alpha_3,\alpha_4,\alpha_5)=2\pi(0.2,0.13,0.07,0.2,0.23)$. The Gaussian curvature of the degenerate case can be modulated within a wide range (positive or negative).}
	\label{fig:degenerate}
\end{figure}
Here if the index $j \notin \{1,2,\dots,k \}$, we set $\alpha_j =0$.
 Given the creases ($\alpha_i$) for the generic case, Eq.~(\ref{eq:Gauss}) provides an explicit formula for the Gaussian curvature of the tip upon actuation. The Gaussian curvature of the orthogonal dual can be computed by simply replacing $r$ with $1/r$. Since a concentric circle pattern and a radial director pattern generate positive and negative Gaussian curvature respectively for $r<1$ \cite{feng2022interfacial}, it is natural to ask whether the two orthogonal duals always generate positive and negative Gaussian curvature. Though not explicitly seen from (\ref{eq:Gauss}), the answer is ``No". A numerical test with $k=5, r=1.1$ and $(\alpha_1,\alpha_2,\alpha_3,\alpha_4) = 2\pi(0.2,0.1,0.05,0.55)$ gives $G=0.445$ and $G=0.807$ for the two orthogonal dulas, which are both positive. Another example is the degenerate case in Fig.~\ref{fig:degenerate}(a). We refer the reader to \cite{duffy2020defective} for the Gaussian curvature of a generic director defect.

For the degenerate case with even $k$, the solution $error(\theta)=0$ is a continuous range $\theta \in (0,\pi)$. Thus, we may expect that the actuated Gaussian curvature can also be modulated within a range. Suppose $k$ is even, $\alpha_2 + \alpha_4 + \dots =\pi/2, \pi$ or $3\pi/2$ and $\bfn_1=\bfR_{\bfe_3}(\theta)\bfe_1$ for $\theta \in (0,\pi)$. We may follow exactly the same process to compute a Gaussian curvature function $G=G_{even}(k,\alpha_1,\dots,\alpha_k)$ for even $k$. We omit the specific derivation and claim that $G_{even}(k,\alpha_1,\dots,\alpha_k)$ can be obtained from $G_{odd}(k+1,\alpha_1,\dots,\alpha_k,\alpha_{k+1})$ by adding an arbitrary $\alpha_{k+1}$, replacing $\tilde{\beta}_i$ with  $\tilde{\beta}_i + 2\theta_0 -2 \theta$ (now $\theta_0=-\alpha_2-\alpha_4 - \dots - \alpha_{k}$) and changing the upper limit of summation from $k+1$ to $k$ in $G_{odd}(k+1,\alpha_1,\dots,\alpha_k,\alpha_{k+1})$ . The formula has been numerically validated. Figure~\ref{fig:degenerate} shows that the Gaussian curvature of the degenerate case can be modulated within a wide range (positive or negative) for the orthogonal duals.

}

\section{Three-fold intersection} \label{sec:kinematics}
In this section, we study the kinematics of three-fold intersections with patterned piecewise constant director fields satisfying the metric compatibility condition at each interface. Our approach is shown in Fig.~\ref{fig:three_fold}(a). The reference pattern is metric-compatible at each interface. To compute the actuated configuration, the first step is to cut the reference pattern along the interface $\bft_1$ and then deform it by the spontaneous deformation gradient $\bfU_{\bfn_i}$ at each domain. This will form an intermediate configuration in 2D,  with a deficit or a surplus. The second step is to deform the intermediate configuration isometrically and stitch the boundaries together to form a non-Euclidean origami with Gaussian curvature at the tip. We use $\alpha_i$ and $\tilde{\alpha}_i$ to denote the sector angles before and after actuation, respectively.

\begin{figure}[ht]
	\centering
	\includegraphics[width=\textwidth]{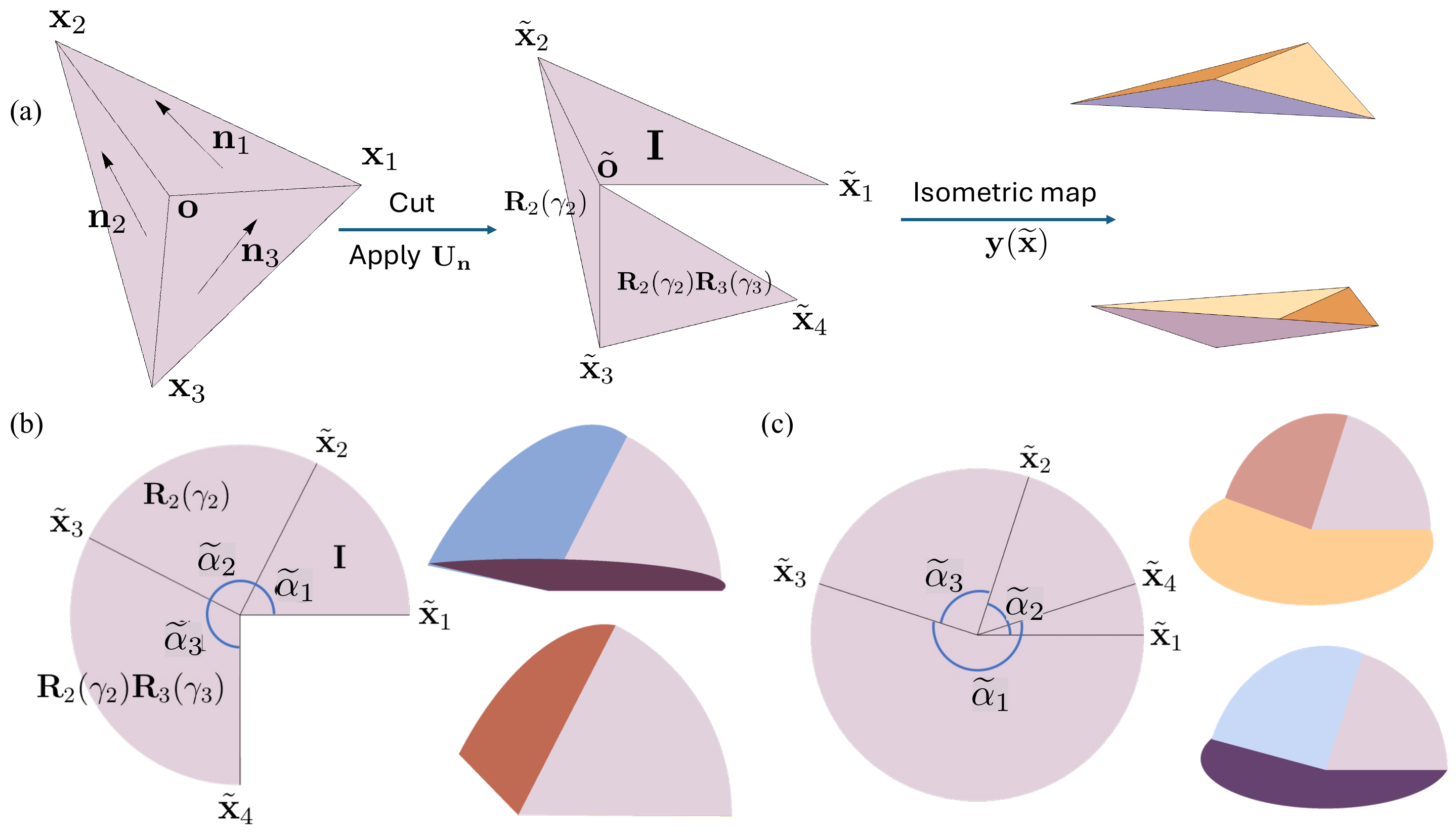}
	\caption{(a) A three-fold non-Euclidean origami constructed through cutting, deforming by $\bfU_{\bfn}$, and stitching. (b) A three-fold intersection with a deficit ($\tilde{\alpha}_1+\tilde{\alpha}_2+\tilde{\alpha}_3<2\pi$). (c) A three-fold intersection with a surplus ($\tilde{\alpha}_1+\tilde{\alpha}_2+\tilde{\alpha}_3>2\pi$).}
	\label{fig:three_fold}
\end{figure}

\subsection{Three-fold origami with angular deficit and surplus}
We first employ the continuum mechanics approach to derive the kinematics from the intermediate state to the folded state for the three-fold origami, where the intermediate state has an angular deficit or surplus. Let $\tilde{\bfx}_i = \tilde{\delta}_i \tilde{\bft}_i$, where $\tilde{\bft}_i$ is the unit tangent of the crease, for $i=1,2,3,4$, in the intermediate state. Let the sector angle between $\tilde{\bft}_i$ and $\tilde{\bft}_{i+1}$  be $\tilde{\alpha}_{i}$, as shown in Fig.~\ref{fig:three_fold}(b). Then the tangents are given by
\beq
\tilde{\bft}_1 = \bfe_1,~ \tilde{\bft}_2= \bfR_{\bfe_3} (\tilde{\alpha}_1) \bfe_1, ~\tilde{\bft}_3 = \bfR_{\bfe_3}(\tilde{\alpha}_1 + \tilde{\alpha}_2) \bfe_1,~ \tilde{\bft}_4 = \bfR_{\bfe_3}(\tilde{\alpha}_1 + \tilde{\alpha}_2 + \tilde{\alpha}_3) \bfe_1. \label{eq:defomred_tangent1}
\eeq
The pattern with an angular deficit (surplus) satisfies $\tilde{\alpha}_1 + \tilde{\alpha}_2 + \tilde{\alpha}_3 < 2\pi$ ($\tilde{\alpha}_1 + \tilde{\alpha}_2 + \tilde{\alpha}_3 > 2\pi$).

By folding along the creases isometrically, the origami may form a continuous  configuration in 3D by matching $\tilde{\bft}_4$ and $\tilde{\bft}_1$, as illustrated in Fig.~\ref{fig:three_fold}. We employ the continuum mechanics approach to derive the folding angles to achieve the continuous configuration. The deformation gradients of the panels are $\bfI$, $\bfR_2(\gamma_2)$, and $\bfR_2(\gamma_2) \bfR_3(\gamma_3)$ respectively, where $\bfR_i (.) = \cos(.) (\tilde{\bft}_i \otimes \tilde{\bft}_i + \tilde{\bft}_i^{\perp} \otimes  \tilde{\bft}_i^{\perp}) + \sin(.) (-\tilde{\bft}_i \otimes \tilde{\bft}_i^{\perp} + \tilde{\bft}_i^{\perp} \otimes \tilde{\bft}_i) + \bfe_3 \otimes \bfe_3$ is a rotation tensor about $\tilde{\bft}_i$. Here $\tilde{\bft}_i^{\perp} = \bfR_{\bfe_3}(\pi/2) \tilde{\bft}_i$.  The continuity at $\tilde{\bft}_1$ (and $\tilde{\bft}_4$) requires the compatibility condition
	\beq
	\bfR_2(\gamma_2) \bfR_3(\gamma_3) \tilde{\bft}_4 = \tilde{\bft}_1 \label{eq:compatibility_3fold1}
	\eeq
	for both the deficit case and the surplus case. Satisfying the compatibility condition yields a folding deformation $\bfy(\tilde{\bfx})$ that maps the intermediate state to the folded state as
\beq
\bfy (\tilde{\bfx})  = \left\{ \begin{array}{ll} \tilde{\bfx}, 
	&  \tilde{\bfx} \cdot \bfe_3 = 0,\ \tilde{\bfx} \cdot \tilde{\bft}_2^\perp<0,\
	\tilde{\bfx} \cdot \tilde{\bft}_1^\perp \ge 0,  \\
	\bfR_2(\gamma_2) \tilde{\bfx}, &  \tilde{\bfx} \cdot \bfe_3 = 0,\ \tilde{\bfx} \cdot \tilde{\bft}_3^\perp <0,\
	\tilde{\bfx} \cdot \tilde{\bft}_2^\perp \ge 0, \\  
	\bfR_2(\gamma_2) \bfR_3(\gamma_3) \tilde{\bfx}, &  \tilde{\bfx} \cdot \bfe_3 = 0,\ \tilde{\bfx} \cdot \tilde{\bft}_4^\perp <0,\
	\tilde{\bfx} \cdot \tilde{\bft}_3^\perp \ge 0.
\end{array} \right. \label{y(x)}
\eeq
The remaining problem is to determine the folding angles $\gamma_2$ and $\gamma_3$, which are given by the following theorem. The theorem implies that there are two compatible deformed configurations at fixed folding angles $\gamma_2$ and $\gamma_3$.
	 \begin{theorem}
 Given $\tilde{\alpha}_1, \tilde{\alpha}_2, \tilde{\alpha}_3 \in (0, 2\pi)\setminus\{\pi\}$ as sector angles, the compatibility condition (\ref{eq:compatibility_3fold1}) holds if and only if 
\beqs
\gamma_2 &=& \delta \arccos \left( \frac{-\cos \tilde{\alpha}_3 + \cos\tilde{\alpha}_1 \cos\tilde{\alpha}_2}{\sin \tilde{\alpha}_1 \sin\tilde{\alpha}_2}\right), \nonumber \\
\gamma_3 &=& \delta \sign\left(\frac{\sin\tilde{\alpha}_1}{\sin\tilde{\alpha}_3}\right)\arccos \left( \frac{-\cos \tilde{\alpha}_1 + \cos\tilde{\alpha}_2 \cos\tilde{\alpha}_3}{\sin \tilde{\alpha}_2 \sin\tilde{\alpha}_3}\right), \label{eq:foldingangle_3fold}
\eeqs
for $\tilde{\alpha}_i$ in the admissible domain shown in Fig.~\ref{fig:ad_domain}(a), where $\delta\in{\pm 1}$ denotes the two branches.

	 \end{theorem}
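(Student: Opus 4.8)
The plan is to reduce the tensor identity (\ref{eq:compatibility_3fold1}) to three scalar equations by projecting onto a convenient basis, to read off $\cos\gamma_2$ and $\cos\gamma_3$ from two of them, and then to use the third together with a symmetric algebraic identity to fix the relative sign of the two branches. First I would move $\bfR_2(\gamma_2)$ to the right-hand side and rewrite the closure condition as $\bfR_3(\gamma_3)\tilde{\bft}_4=\bfR_2(-\gamma_2)\tilde{\bft}_1$; call the two sides $\bfv$ and $\bfw$, both unit vectors. The key structural observation is that in the intermediate flat state all tangents $\tilde{\bft}_i$ lie in the $\bfe_1,\bfe_2$ plane with inner products fixed by the $\tilde{\alpha}_i$, and that $\{\tilde{\bft}_2,\tilde{\bft}_3,\bfe_3\}$ is a basis of $\mathbb{R}^3$ whenever $\tilde{\alpha}_2\neq\pi$. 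Hence $\bfv=\bfw$ holds if and only if the three inner products of $\bfv$ and $\bfw$ against $\tilde{\bft}_3$, $\tilde{\bft}_2$, and $\bfe_3$ agree, so that both directions of the equivalence follow from the same computation.

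Using the Rodrigues form of the rotation $\bfR_i(\cdot)$ about $\tilde{\bft}_i$, I would evaluate these projections. Since $\bfR_3$ fixes $\tilde{\bft}_3$ and $\bfR_2$ fixes $\tilde{\bft}_2$, and since every cross-product term $\tilde{\bft}_j\times\tilde{\bft}_k$ points along $\pm\bfe_3$ and is therefore annihilated by the in-plane projections, the $\tilde{\bft}_3$-projection collapses to a relation of spherical law of cosines type,
\[
\cos\tilde{\alpha}_3=\cos\tilde{\alpha}_1\cos\tilde{\alpha}_2-\cos\gamma_2\,\sin\tilde{\alpha}_1\sin\tilde{\alpha}_2,
\]
which rearranges to the stated argument of $\arccos$ for $\gamma_2$; the $\tilde{\bft}_2$-projection gives the analogous relation for $\gamma_3$. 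These two equations determine $\cos\gamma_2$ and $\cos\gamma_3$, hence each $\gamma_i$ up to a sign, and the requirement that the two arguments lie in $[-1,1]$ is precisely the admissible region of Fig.~\ref{fig:ad_domain}(a).

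The remaining ingredient is the correlation of the two signs, which comes from the out-of-plane projection. Projecting $\bfv=\bfw$ onto $\bfe_3$ and using $\tilde{\bft}_3\times\tilde{\bft}_4=\sin\tilde{\alpha}_3\,\bfe_3$ and $\tilde{\bft}_2\times\tilde{\bft}_1=-\sin\tilde{\alpha}_1\,\bfe_3$ gives $\sin\gamma_3\sin\tilde{\alpha}_3=\sin\gamma_2\sin\tilde{\alpha}_1$. Writing $\gamma_2=\sigma_2\arccos(\cdot)$ and $\gamma_3=\sigma_3\arccos(\cdot)$ with $\sigma_2,\sigma_3\in\{\pm1\}$, I would then establish the identity
\[
(1-\cos^2\gamma_2)\sin^2\tilde{\alpha}_1=(1-\cos^2\gamma_3)\sin^2\tilde{\alpha}_3=\frac{\Delta}{\sin^2\tilde{\alpha}_2},\qquad \Delta:=1-\sum_i\cos^2\tilde{\alpha}_i+2\prod_i\cos\tilde{\alpha}_i,
\]
which holds because the numerator of each $1-\cos^2\gamma$ reduces to the fully symmetric quantity $\Delta$. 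The out-of-plane condition then collapses to the pure sign statement $\sigma_2\,\sign(\sin\tilde{\alpha}_1)=\sigma_3\,\sign(\sin\tilde{\alpha}_3)$, i.e. $\sigma_3=\sigma_2\,\sign(\sin\tilde{\alpha}_1/\sin\tilde{\alpha}_3)$; setting $\sigma:=\sigma_2$ reproduces the two branches in (\ref{eq:foldingangle_3fold}).

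I expect this last sign correlation to be the main obstacle. The two cosine equations by themselves leave four independent branch choices, and it is exactly the out-of-plane consistency---encoded through the symmetric invariant $\Delta$ and the factor $\sign(\sin\tilde{\alpha}_1/\sin\tilde{\alpha}_3)$, which is nontrivial only because the sector angles are permitted to exceed $\pi$---that singles out the correct one-parameter ($\sigma$) family. Once the sign relation is in hand the equivalence is immediate, since agreement of all three projections against the basis $\{\tilde{\bft}_2,\tilde{\bft}_3,\bfe_3\}$ is the same as $\bfv=\bfw$, and hence as (\ref{eq:compatibility_3fold1}).
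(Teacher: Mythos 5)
Your proposal is correct and follows essentially the same route as the paper's proof: projecting the closure condition onto $\tilde{\bft}_2$, $\tilde{\bft}_3$, and $\bfe_3$ to obtain the two spherical-law-of-cosines relations and the sine relation $\sin\gamma_3\sin\tilde{\alpha}_3=\sin\gamma_2\sin\tilde{\alpha}_1$ that correlates the branch signs. Your additions---the explicit basis argument $\{\tilde{\bft}_2,\tilde{\bft}_3,\bfe_3\}$ establishing the ``if'' direction, and the symmetric invariant $\Delta$ showing the sine equation constrains only the signs---are details the paper leaves implicit, but they do not change the method.
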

 \begin{proof}
 	Projecting both sides of Eq.~(\ref{eq:compatibility_3fold1}) on $\tilde{\bft}_2$ yields
 	\beq
 	\bfR_2(\gamma_2) \bfR_3(\gamma_3) \tilde{\bft}_4 \cdot \tilde{\bft}_2 = \tilde{\bft}_1 \cdot \tilde{\bft}_2 \Leftrightarrow \cos \gamma_3 = \frac{-\cos \tilde{\alpha}_1 + \cos\tilde{\alpha}_2 \cos\tilde{\alpha}_3}{\sin \tilde{\alpha}_2 \sin\tilde{\alpha}_3}. \label{eq:cosgamma3_3}
 	\eeq
 	Similarly, dotting Eq.~(\ref{eq:compatibility_3fold1}) with $\bfR_2^{\mathrm{T}}\tilde{\bft}_3$ gives
 	\beq
 	\cos \gamma_2 = \frac{-\cos \tilde{\alpha}_3 + \cos\tilde{\alpha}_1 \cos\tilde{\alpha}_2}{\sin \tilde{\alpha}_1 \sin\tilde{\alpha}_2}. \label{eq:cosgamma2_3}
 	\eeq
To determine the sign of $\gamma_2$ and $\gamma_3$, we dot (\ref{eq:compatibility_3fold1}) with $\bfe_3$ and substitute $\cos\gamma_i$ to obtain
\beq
\sin\gamma_3 = \frac{\sin\tilde{\alpha}_1}{\sin\tilde{\alpha}_3} \sin\gamma_2.
\eeq
Thus the solution of $\gamma_2$ and $\gamma_3$ can be summarized as 
\beqs
\gamma_2 &=& \delta \arccos \left( \frac{-\cos \tilde{\alpha}_3 + \cos\tilde{\alpha}_1 \cos\tilde{\alpha}_2}{\sin \tilde{\alpha}_1 \sin\tilde{\alpha}_2}\right), \nonumber \\
\gamma_3 &=& \delta \sign\left(\frac{\sin\tilde{\alpha}_1}{\sin\tilde{\alpha}_3}\right)\arccos \left( \frac{-\cos \tilde{\alpha}_1 + \cos\tilde{\alpha}_2 \cos\tilde{\alpha}_3}{\sin \tilde{\alpha}_2 \sin\tilde{\alpha}_3}\right),
\eeqs
where $\delta \in \{\pm 1\}$ denotes the two branches.
    {Since $\cos\gamma_2, \cos\gamma_3 \in [-1,1]$, we have the admissible domain (Fig.~\ref{fig:ad_domain}(a)) for $\tilde{\alpha}_{1,2,3}$ such that the absolute values of Eqs.~(\ref{eq:cosgamma3_3}) and (\ref{eq:cosgamma2_3}) are less than or equal to 1. The symmetric shape of the domain arises from the symmetry of the three sector angles.}
 \end{proof}
 
\begin{figure}[h]
	\centering
	\includegraphics[width=\textwidth]{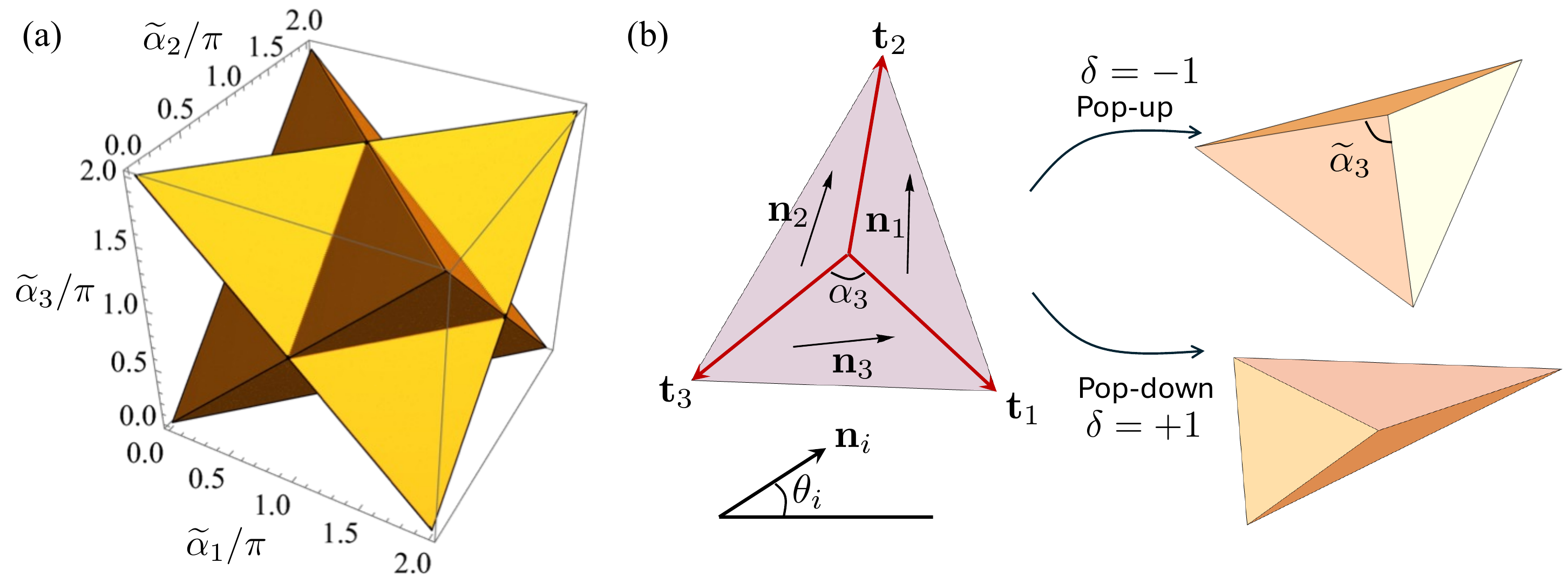}
	\caption{(a) The admissible domain of $\tilde{\alpha}_1$, $\tilde{\alpha}_2$ and $\tilde{\alpha}_3$, which ensures the folding angle functions (\ref{eq:cosgamma3_3}) and (\ref{eq:cosgamma2_3}) to have solutions. (b) Notation for a three-fold director field.}
	\label{fig:ad_domain}
\end{figure}

\subsection{Connection to director fields}
Now we consider the relationship between the reference director field and the final folded configuration. The final state can be determined directly from the initial director fields, eliminating the need to compute the intermediate state.

Let $\bfn_i =\bfR_{\bfe_3} (\theta_i) \bfe_1$, $i=1,2,3$ be the constant director on the $i-$th domain $\calS_i$ in a three-fold field. The three domains are separated by three interfaces with unit tangents $\bft_i, i=1,2,3$, where $\bft_i$ is the tangent of the interface between the domain $\calS_{i}$ and $\calS_{i-1}$ (here we define $\calS_0 = \calS_3$). Following the metric-compatible condition (\ref{eq:metric}) and the tangent solution (\ref{eq:tangent}), the unit tangent $\bft_i$ of the interface satisfies
\beq
\bft_i = \frac{\sigma_i \bfn_i + \tilde{\sigma}_i \bfn_{i-1}}{|\sigma_i \bfn_i + \tilde{\sigma}_i \bfn_{i-1}|}, \label{eq:tangent_3fold}
\eeq
where $\sigma_i, \tilde{\sigma}_i \in \{\pm 1 \}$. Equation~(\ref{eq:tangent_3fold}) implies that there are four solutions of $\bft_2$ for the prescribed $\bfn_1$ and $\bfn_2$. However, these tangents cannot be arbitrarily chosen from the four solutions when patterning a three-fold pattern$-$they need to obey the following topological condition 
\beq
\sign[(\bft_1 \times \bft_2) \cdot \bfe_3] = \sign[(\bft_2 \times \bft_3) \cdot \bfe_3] =\sign[(\bft_3 \times \bft_1) \cdot \bfe_3]  = +1 \label{eq:topolotial_condition1}
\eeq
to arrange $\bft_1, \bft_2, \bft_3$ counterclockwise and ensure that the sector angles are less than $\pi$. 

Denote the three vertices of the three-fold intersection by $\bfx_i =\delta_i \bft_i$ (no summation), where $\delta_i>0$ is the length of the crease with tangent $\bft_i$, $i=1,2,3$. Upon actuation, the panels remain flat and exhibit pure rotations about the crease. The three-fold intersection will pop up (or down) and form a dome-like structure if the sum of the actuated sector angles is less than $2 \pi$ (Fig.~\ref{fig:ad_domain}(b)). We derive the non-isometric deformation from its reference state to the final deformed state. First,
employing Eq.~(\ref{eq:sector_angle}), one can obtain the deformed sector angles as
\beq
\tilde{\alpha}_i = \arccos\left(\frac{\bfU_{\bfn_i} \bft_{i} \cdot \bfU_{\bfn_i}\bft_{i+1}}{|\bfU_{\bfn_i} \bft_{i}| |\bfU_{\bfn_i} \bft_{i+1}|}\right), \label{eq:deformed_sector_angle}
\eeq
where $i=1,2,3$, $\bft_4 := \bft_1$, and $\bft_i$ is given by (\ref{eq:tangent_3fold}). We then substitute the deformed sector angles $\tilde{\alpha}_i$ into the folding angle function (\ref{eq:foldingangle_3fold}), to obtain the folding angles $\gamma_i$ in terms of the direction of the director $\theta_i$, as stated in the following theorem.

\begin{figure}[ht]
    \centering
    \includegraphics[width=0.8\linewidth]{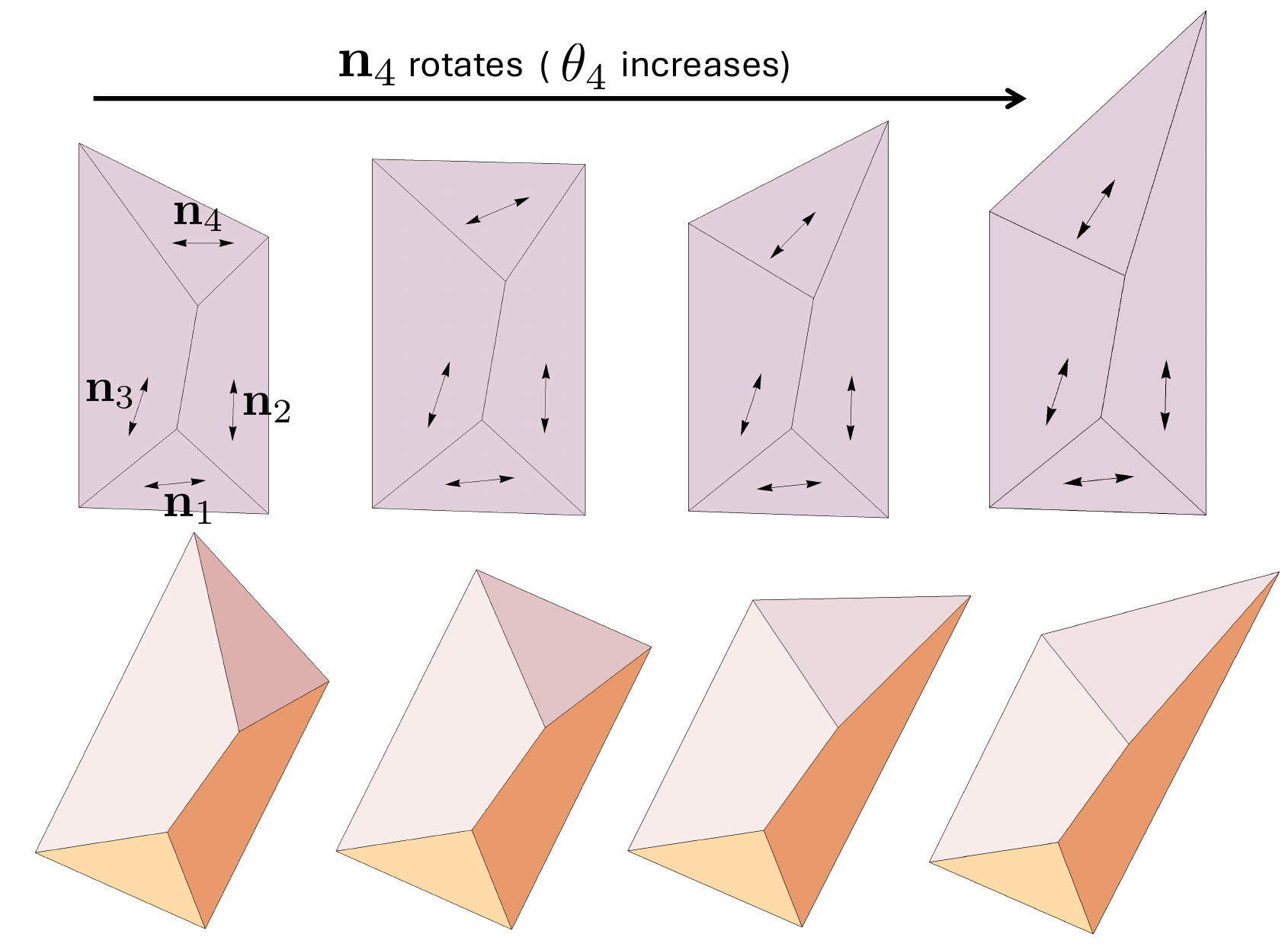}
    \caption{Director patterns consisting of two three-fold intersections with metric-compatible interfaces. The deformed configurations remain compatible when changing $\bfn_4$ but keeping $\bfn_1,\bfn_2,\bfn_3$ unchanged.}
    \label{fig:angle_change}
\end{figure}
\begin{theorem} \label{thm:3fold}
	Assume that the three-fold intersection of piecewise constant director fields shown in Fig.~\ref{fig:ad_domain}(b) satisfies the topological condition (\ref{eq:topolotial_condition1}) and the tangent condition (\ref{eq:tangent_3fold}). The directors are given by $\bfn_i = \bfR_{\bfe_3}(\theta_i) \bfe_1$, $i=1,2,3$. Then the folding angles $\gamma_2, \gamma_3$ satisfying the compatibility condition (\ref{eq:compatibility_3fold1}) are given by
\beq
\cos\gamma_i = (r^2)^{-\hat{\sigma}_1 \hat{\sigma}_{2} \hat{\sigma}_{3}} + \left(\hat{\sigma}_{i-1} \hat{\sigma}_{i+1}  - \hat{\sigma}_{i-1} \hat{\sigma}_{i+1}  (r^2)^{-\hat{\sigma}_1 \hat{\sigma}_2 \hat{\sigma}_3 } \right) \cos(\theta_i - \theta_{i-1}), \label{eq:thm_cosgamma}
\eeq
where $\hat{\sigma}_i = \sigma_i \tilde{\sigma}_i $.
	This leads to two solutions for folding angles $\gamma_2$ and $\gamma_3$:
	\beq
	\gamma_i =\pm \arccos\left[(r^2)^{-\hat{\sigma}_1 \hat{\sigma}_{2} \hat{\sigma}_{3}} + \left(\hat{\sigma}_{i-1} \hat{\sigma}_{i+1}  - \hat{\sigma}_{i-1} \hat{\sigma}_{i+1}  (r^2)^{-\hat{\sigma}_1 \hat{\sigma}_2 \hat{\sigma}_3 } \right) \cos(\theta_i - \theta_{i-1})\right], ~i=2,3,
	\eeq
where $\pm$ corresponds to the ``pop-up"  or the ``pop-down" deformation respectively, and $r=\lambda_\parallel / \lambda_\perp$.
\end{theorem}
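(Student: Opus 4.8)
The plan is to substitute the director-based expressions for the actuated sector angles into the spherical-triangle relations (\ref{eq:cosgamma2_3}) and (\ref{eq:cosgamma3_3}) of the preceding theorem and to show that the result collapses to a function of a single angle difference. First I would parametrize each crease by the angle it makes with the adjacent director. Writing $\Delta_i = \theta_i - \theta_{i-1}$ and $u_i = \Delta_i/2$, the metric-compatible tangent (\ref{eq:tangent_3fold}) is, up to orientation, the bisector of $\bfn_{i-1}$ and $\bfn_i$ when $\hat\sigma_i = \sigma_i\tilde\sigma_i = +1$ and the orthogonal dual of that bisector when $\hat\sigma_i = -1$. In the bisector case $\bft_i$ makes angle $-u_i$ with $\bfn_i$ and $+u_i$ with $\bfn_{i-1}$; the two panels therefore see equal $|\cos|$ and $|\sin|$, which is precisely the metric-compatibility (\ref{eq:metric}) and guarantees that the deformed crease length $|\bfU_{\bfn_i}\bft_i|$ agrees from both sides.

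Next, for each domain $\calS_i$ I would evaluate the two ingredients of (\ref{eq:deformed_sector_angle}) in closed form via $\bfU_{\bfn_i}\bft\cdot\bfU_{\bfn_i}\bft' = \lambda_\parallel^2(\bfn_i\cdot\bft)(\bfn_i\cdot\bft') + \lambda_\perp^2(\bfn_i^\perp\cdot\bft)(\bfn_i^\perp\cdot\bft')$. Abbreviating $c_i=\cos u_i$ and $s_i=\sin u_i$, the inner product $N_i := \bfU_{\bfn_i}\bft_i\cdot\bfU_{\bfn_i}\bft_{i+1} = \lambda_\parallel^2 c_i c_{i+1} - \lambda_\perp^2 s_i s_{i+1}$ and the squared norm $P_i := |\bfU_{\bfn_i}\bft_i|^2 = \lambda_\parallel^2 c_i^2 + \lambda_\perp^2 s_i^2$ are low-degree polynomials in $\lambda_\parallel^2, \lambda_\perp^2$. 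The decisive algebraic fact is that the loop closes, $u_1+u_2+u_3 = 0$, so $\sin(u_i+u_{i+1}) = -\sin u_{i-1}$; a short computation then gives $P_iP_{i+1} - N_i^2 = \lambda_\parallel^2\lambda_\perp^2\sin^2 u_{i-1}$, whence $\cos\tilde\alpha_i = N_i/\sqrt{P_iP_{i+1}}$ and $\sin\tilde\alpha_i = \lambda_\parallel\lambda_\perp|\sin u_{i-1}|/\sqrt{P_iP_{i+1}}$. When these are inserted into (\ref{eq:cosgamma2_3}) the $\sqrt{P_i}$ factors cancel identically, and $\cos\gamma_2$ reduces to $(N_1 N_2 - N_3 P_2)/(\lambda_\parallel^2\lambda_\perp^2|\sin u_1\sin u_3|)$, with the analogous reduction for $\gamma_3$.

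The step I expect to be the main obstacle is proving that this ratio simplifies to $r^{-2} + (1-r^{-2})\cos\Delta_2$ with $r=\lambda_\parallel/\lambda_\perp$, i.e.\ that all separate dependence on $u_1$ and $u_3$ cancels and only $\Delta_2 = \theta_2-\theta_1$ survives. I would expand $N_1 N_2 - N_3 P_2$, collect the coefficients of $\lambda_\perp^4$ and $\lambda_\parallel^2\lambda_\perp^2$, and repeatedly apply the loop constraint through the identities $c_1 s_3 + s_1 c_3 = -\sin u_2$ and $c_1 c_3 - s_1 s_3 = \cos u_2$ to eliminate the cross-terms in $\cos(u_1-u_3)$. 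The non-obvious cancellation is exactly that these cross-terms drop out, so that $\cos\gamma_i$ depends only on the single difference $\theta_i - \theta_{i-1}$; this is the clean content of the theorem, and I would spot-check it against the symmetric configuration $u_1 = u_3 = -u_2/2$ before trusting the general manipulation.

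Finally I would lift the all-bisector computation to arbitrary sign patterns. Replacing a bisector crease by its orthogonal dual shifts its angle to the adjacent director by $\pi/2$, which interchanges $c_i \leftrightarrow s_i$ (up to sign) in both $N_i$ and $P_i$; tracking how each such swap inverts the local balance between $\lambda_\parallel^2$ and $\lambda_\perp^2$ shows that the net power of $r^2$ is controlled by the product $\hat\sigma_1\hat\sigma_2\hat\sigma_3$, yielding the exponent $-\hat\sigma_1\hat\sigma_2\hat\sigma_3$, while the coefficient multiplying $\cos(\theta_i-\theta_{i-1})$ acquires the sign $\hat\sigma_{i-1}\hat\sigma_{i+1}$. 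Collecting the cases produces (\ref{eq:thm_cosgamma}); the two branches $\gamma_i = \pm\arccos(\cdots)$ are the pop-up and pop-down states inherited from the $\sigma = \pm 1$ branches of the preceding theorem.
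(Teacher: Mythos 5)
Your proposal is correct and follows essentially the same route as the paper: substitute the director-induced sector angles (\ref{eq:deformed_sector_angle}) into the three-fold folding-angle formula (\ref{eq:cosgamma2_3}), use metric compatibility together with the determinant identity $|\bfU_{\bfn}\bft|\,|\bfU_{\bfn}\bft'|\sin\tilde{\alpha}=\lambda_\parallel\lambda_\perp\sin\alpha$ to clear the norms, and verify that the resulting rational expression collapses to a function of $\theta_i-\theta_{i-1}$ alone. The only difference is organizational: the paper reduces the $64$ sign choices to $8$ essentially distinct cases and evaluates them by direct computation before unifying, whereas you work out the all-bisector case explicitly via the half-angle and loop-closure identities and then lift to the remaining sign patterns by a $c_i\leftrightarrow s_i$ swap---that last step is asserted rather than carried out, but it amounts to the same finite verification the paper performs.
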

\begin{proof}
We take $\cos \gamma_2$ as an example for the proof, and the other $\gamma_i$ can be obtained by cycling the indices. As shown in (\ref{eq:cosgamma2_3}), we obtain the folding angle $\gamma_2$ as a function of the intermediate sector angles
\beq
\cos\gamma_2 = \frac{-\cos\tilde{\alpha}_3 + \cos\tilde{\alpha}_1 \cos\tilde{\alpha}_2}{\sin\tilde{\alpha}_1 \sin \tilde{\alpha}_2}, \label{eq:cosgamma2}
\eeq
where the intermediate sector angles are given by (\ref{eq:deformed_sector_angle}). Specifically, we have
\beq
\cos\tilde{\alpha}_1 = \frac{\bfU_{\bfn_1}\bft_1 \cdot \bfU_{\bfn_1}\bft_2}{|\bfU_{\bfn_1}\bft_1| |\bfU_{\bfn_1}\bft_2|},~ \cos\tilde{\alpha}_2 = \frac{\bfU_{\bfn_2}\bft_2 \cdot \bfU_{\bfn_2}\bft_3}{|\bfU_{\bfn_2}\bft_2| |\bfU_{\bfn_2}\bft_3|},~
\cos\tilde{\alpha}_3 = \frac{\bfU_{\bfn_3}\bft_1 \cdot \bfU_{\bfn_3}\bft_3}{|\bfU_{\bfn_3}\bft_1| |\bfU_{\bfn_3}\bft_3|}.
\label{eq:cos}
\eeq
Notice that $|\bfU_{\bfn_1} \bft_2| = |\bfU_{\bfn_2} \bft_2|$, $|\bfU_{\bfn_2} \bft_3| = |\bfU_{\bfn_3} \bft_3|$, and $|\bfU_{\bfn_1} \bft_1| = |\bfU_{\bfn_3} \bft_1|$ by the metric compatibility. Substituting (\ref{eq:cos}) into (\ref{eq:cosgamma2}) yields
\beq
\cos\gamma_2 = \frac{(\bfU_{\bfn_1} \bft_1 \cdot \bfU_{\bfn_1} \bft_2) (\bfU_{\bfn_2} \bft_2 \cdot \bfU_{\bfn_2} \bft_3) - (\bfU_{\bfn_3} \bft_1 \cdot \bfU_{\bfn_3} \bft_3) |\bfU_{\bfn_2} \bft_2|^2}{\lambda_{\parallel}^2 \lambda_{\perp}^2(\bft_1 \times \bft_2)\cdot (\bft_2 \times \bft_3)}.
\label{eq:cosgamma22}
\eeq
The denominator of (\ref{eq:cosgamma22}) follows the equalities
\beqs
&&|\bfU_{\bfn_1} \bft_1| |\bfU_{\bfn_1} \bft_2| \sin\tilde{\alpha}_1 = \lambda_{\parallel} \lambda_{\perp} \sin\alpha_1 = \lambda_{\parallel} \lambda_{\perp} (\bft_1 \times \bft_2) \cdot \bfe_3,\\ \nonumber
&& |\bfU_{\bfn_2} \bft_2| |\bfU_{\bfn_2} \bft_3| \sin \tilde{\alpha}_2 = \lambda_{\parallel} \lambda_{\perp} \sin\alpha_2 = \lambda_{\parallel} \lambda_{\perp} (\bft_2 \times \bft_3) \cdot \bfe_3.
\eeqs
Substituting the expressions of $\bft_i$ and $\bfU_{\bfn_i}$ into (\ref{eq:cosgamma22}), in principal we have 64 different expressions for $\cos\gamma_2$, depending on the different choices of $\sigma_i$ and $\tilde{\sigma}_i$. However, we can reduce the number significantly. Notice that the set of choices $\{ \{\hat{\sigma}_1 \bft_1, \hat{\sigma}_2 \bft_2, \hat{\sigma}_3 \bft_3\}, \hat{\sigma}_i \in \{\pm 1\}, i=1,2,3 \}$ all give the same result for (\ref{eq:cosgamma22}). Therefore, we only need to check 8 cases. Let 
\beq
\bft_1 = \hat{\sigma}_1 \bfn_3 + \bfn_1,~ \bft_2 = \hat{\sigma}_2 \bfn_1 + \bfn_2,~ \bft_3 = \hat{\sigma}_3 \bfn_2 + \bfn_3,
\eeq
for $\hat{\sigma}_i \in \{\pm 1\}$. The direct computation leads to 8 solutions for $\cos\gamma_2$, classified as 
\begin{align}
\cos\gamma_2 = 
\begin{cases}
\frac{({\lambda_{\parallel}}^2-{\lambda_{\perp}}^2) \cos ({\theta_2}-{\theta_1})+{\lambda_{\perp}}^2}{{\lambda_{\parallel}}^2}, \quad \text{if}~\{\hat{\sigma}_1, \hat{\sigma}_2, \hat{\sigma}_3\} = \{+1,+1,+1\}~\text{or}~ \{-1,+1,-1\},  \\
\frac{({\lambda_{\parallel}}^2-{\lambda_{\perp}}^2) \cos ({\theta_2}-{\theta_1})+{\lambda_{\parallel}}^2}{{\lambda_{\perp}}^2}, \quad \text{if}~\{\hat{\sigma}_1, \hat{\sigma}_2, \hat{\sigma}_3\} = \{-1,+1,+1\}~\text{or}~ \{+1,+1,-1\},  \\
\frac{(-{\lambda_{\parallel}}^2+{\lambda_{\perp}}^2) \cos ({\theta_2}-{\theta_1})+{\lambda_{\parallel}}^2}{{\lambda_{\perp}}^2}, \quad \text{if}~\{\hat{\sigma}_1, \hat{\sigma}_2, \hat{\sigma}_3\} = \{+1,-1,+1\}~\text{or}~ \{-1,-1,-1\},  \\
\frac{(-{\lambda_{\parallel}}^2+{\lambda_{\perp}}^2) \cos ({\theta_2}-{\theta_1})+{\lambda_{\perp}}^2}{{\lambda_{\parallel}}^2}, \quad \text{if}~\{\hat{\sigma}_1, \hat{\sigma}_2, \hat{\sigma}_3\} = \{-1,-1,+1\}~\text{or}~ \{+1,-1,-1\}.
\end{cases} \label{eq:8_solutions}
\end{align} 
Summarizing all the above solutions, we have a unified expression 
\beq
\cos\gamma_2 = (r^2)^{-\hat{\sigma}_1 \hat{\sigma}_2 \hat{\sigma}_3}+ \left(\hat{\sigma}_1 \hat{\sigma}_3  - \hat{\sigma}_1 \hat{\sigma}_3  (r^2)^{-\hat{\sigma}_1 \hat{\sigma}_2 \hat{\sigma}_3 } \right) \cos(\theta_2 - \theta_1), \label{eq:proof_3fold}
\eeq
where $r= \lambda_{\parallel} / \lambda_{\perp}$. For all the 64 cases, one can replace $\hat{\sigma}_i$ by $\sigma_i \tilde{\sigma}_i$, which leads to  Eq.~(\ref{eq:thm_cosgamma}) for $i=2$. Cycling $i=1,2,3$ in (\ref{eq:proof_3fold}) yields other folding angles and completes the proof.
\end{proof}

An interesting observation from Theorem~\ref{thm:3fold} is that, fixing $\sigma_i$ and $\tilde{\sigma}_i$, the folding angle $\gamma_i$ depends only on the directors on its adjacent panels, i.e., $\bfn_i$ and $\bfn_{i-1}$ with the directional angles $\theta_i$ and $\theta_{i-1}$. This fact provides more freedom to design connected three-fold intersections.
For example, in Fig.~\ref{fig:angle_change}, the pattern consists of two three-fold intersections.
We may keep $\bfn_2$ and $\bfn_3$ unchanged and vary the direction of $\bfn_4$. {As a result of 
Theorem~\ref{thm:3fold}, the folding angle along the shared crease in the actuated configuration is independent of $\bfn_4$. The folding angle along the shared crease will remain unchanged. Thus, the actuated configuration containing the two three-fold nodes remains compatible when changing $\bfn_4$.} We will use the basic pattern in Fig.~\ref{fig:angle_change} to design more complex patterns in Section 5.

\section{Four-fold intersection}
{
In this section, we study the kinematics of a general non-Euclidean four-fold origami generated by piecewise constant director fields. 
The kinematics of the four-fold origami (or more general n-fold origami) has been extensively studied (Euclidean or non-Euclidean) by Huffman \cite{huffman1976curvature}, Santangelo \cite{evans2015lattice}, van Hecke \cite{waitukaitis2015origami, waitukaitis2020non}, Tachi \cite{tachi2009generalization}, Hull \cite{hull1994mathematics}, Lang and Howell \cite{lang2018rigidly}, Guest \cite{he2020rigid}, Izmestiev \cite{izmestiev2017classification}, and more, using various methods (e.g. spherical geometry, structural kinematics, mechanisms, etc). However, the continuum-mechanics framework developed in our previous work \cite{feng2020designs}—which addressed the Euclidean, flat-foldable case—remains worth exploring in the non-Euclidean regime. In particular, we will derive explicit expressions for the folding angle functions and corresponding deformations within the Lagrangian framework, expressed in terms of a reference folding angle defined over the admissible domain. This formulation provides a homotopy that accurately characterizes the evolution of shapes, which is convenient to use in an algorithm.}

Again, we deform the intermediate state isometrically to achieve a continuous folded state. The method we use to compute the folding angle functions is partially adapted from our previous work \cite{zou2024kinematics}. Here, we use a new symmetry approach to make the folding angle functions rigorous and more convenient to use for computation. We also provide a more physically relevant understanding of the ``branches" in Fig.~\ref{fig:reference}.

\subsection{Reference domain}
Suppose the intermediate state is given in Figure \ref{fig:fourfold}(a), where $0 <  \tilde{\alpha}_1, \tilde{\alpha}_2, \tilde{\alpha}_3, \tilde{\alpha}_4 < \pi$
are the sector angles.
The tangents of the creases are given by
\beq
\tilde{\bft}_1 = \bfe_1,~ \tilde{\bft}_2 = \bfR_{\bfe_3}(\tilde{\alpha}_1) \bfe_1,~ \tilde{\bft}_3 = \bfR_{\bfe_3}(\tilde{\alpha}_1+ \tilde{\alpha}_2) \bfe_1, ~\tilde{\bft}_4 = \bfR_{\bfe_3}(\tilde{\alpha}_1 + \tilde{\alpha}_2 + \tilde{\alpha}_3) \bfe_1,~ \tilde{\bft}_5 = \bfR_{\bfe_3}(\tilde{\alpha}_1+ \tilde{\alpha}_2 + \tilde{\alpha}_3 + \tilde{\alpha}_4) \bfe_1.
\label{tees}
\eeq
In general, the sum of the four sector angles $\sum_{i=1}^{4}\alpha_i$ is not $2\pi$, resulting in nonzero Gaussian curvature at the tip in the actuated configuration. Also, we have $\tilde{\bft}_5 \neq \tilde{\bft}_1$ and a deficit (or surplus) may exist in the reference domain. Alternatively, a second version of the intermediate domain can be obtained with a cut between $\tilde{\bft}_4$ and $ \tilde{\bft}_{55}$ where $\tilde{\bft}_{55}=\cos\tilde{\alpha}_4 \bfe_1 - \sin\tilde{\alpha}_4 \bfe_2$, as shown in Figure \ref{fig:fourfold}(a). 
Combining the equivalent two domains is convenient for computing the folding angles in the next section.

\begin{figure}[ht]
	\center
	\includegraphics[width=\textwidth]{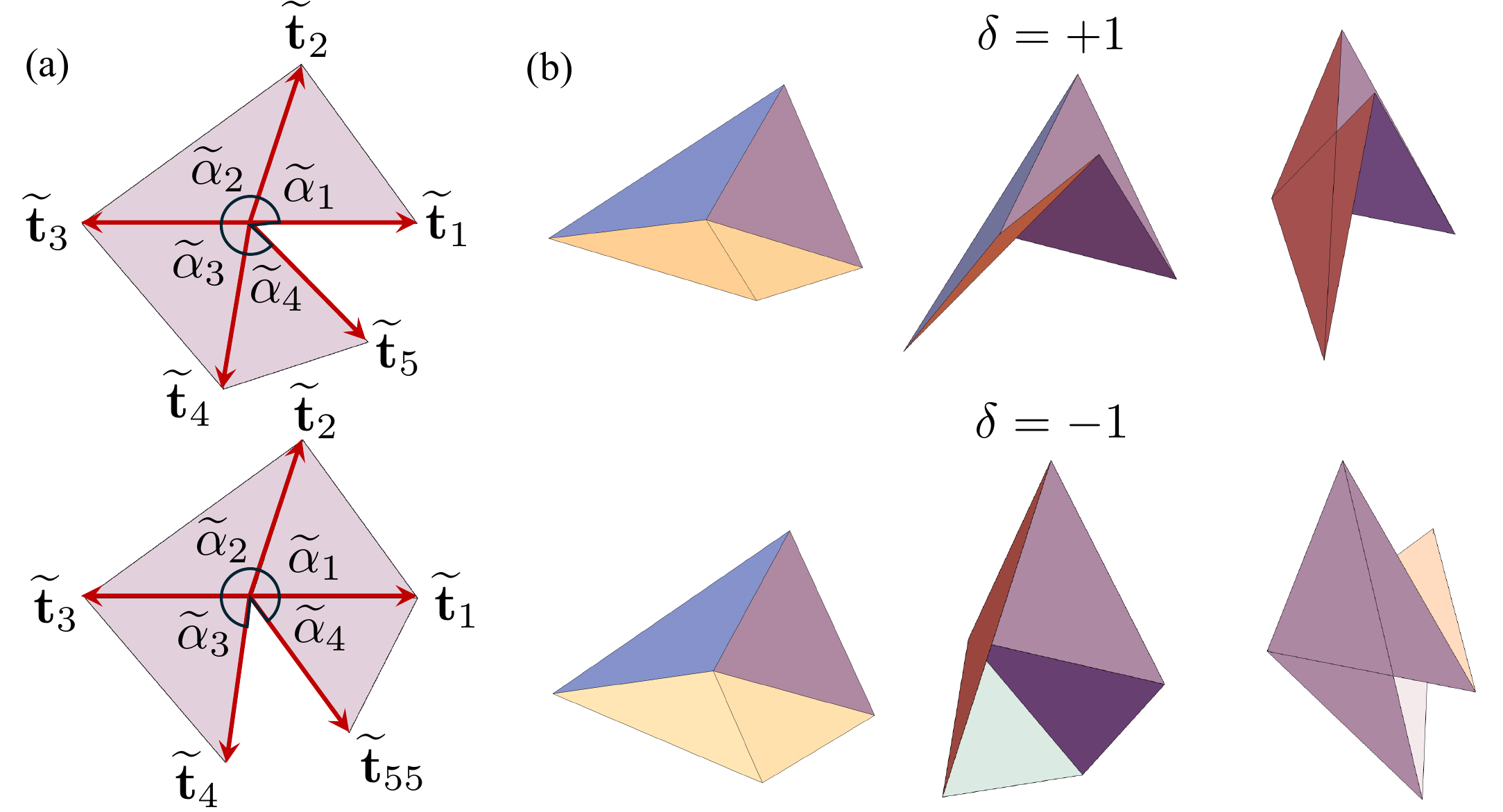}
	\caption{(a) Two equivalent intermediate configurations. (b) Two folding branches denoted by $\delta=+1$ and $\delta=-1$.} \label{fig:fourfold}
\end{figure}

Let $\bfR_i(\theta) \in \text{SO(3)} $ be a rotation with axis $\tilde{\bft}_i$. Similarly to Eq.~(\ref{y(x)}), the isometric deformation from the intermediate state (Fig.~\ref{fig:fourfold}(a) top) to the folded state (Fig.~\ref{fig:fourfold}(b)) is described by
\beq
\bfy (\tilde{\bfx})  = \left\{ \begin{array}{ll} \tilde{\bfx}, 
	&  \tilde{\bfx} \cdot \bfe_3 = 0,\ \tilde{\bfx} \cdot \tilde{\bft}_2^\perp<0,\
	\tilde{\bfx} \cdot \tilde{\bft}_1^\perp \ge 0,  \\
	\bfR_2(\gamma_2) \tilde{\bfx}, &  \tilde{\bfx} \cdot \bfe_3 = 0,\ \tilde{\bfx} \cdot \tilde{\bft}_3^\perp <0,\
	\tilde{\bfx} \cdot \tilde{\bft}_2^\perp \ge 0, \\  
	\bfR_2(\gamma_2) \bfR_3(\gamma_3) \tilde{\bfx}, &  \tilde{\bfx} \cdot \bfe_3 = 0,\ \tilde{\bfx} \cdot \tilde{\bft}_4^\perp <0,\
	\tilde{\bfx} \cdot \tilde{\bft}_3^\perp \ge 0, \\
    \bfR_2(\gamma_2) \bfR_3(\gamma_3) \bfR_4(\gamma_4) \tilde{\bfx}, &  \tilde{\bfx} \cdot \bfe_3 = 0,\ \tilde{\bfx} \cdot \tilde{\bft}_5^\perp <0,\
	\tilde{\bfx} \cdot \tilde{\bft}_4^\perp \ge 0.
\end{array} \right. \label{eq:y(x)fourfold}
\eeq
The deformation is continuous except at the $\tilde{\bft}_1 \tilde{\bft}_5$ boundary. Thus, a compatibility condition is needed to ensure continuity, which leads to the solutions of the folding angles in the next section. Using the second version of the intermediate domain (Fig.~\ref{fig:fourfold}(a) bottom), the deformation in the fourth sector domain $ \{\tilde{\bfx} \cdot \bfe_3 = 0,\ \tilde{\bfx} \cdot \tilde{\bft}^\perp_1<0,\
\tilde{\bfx} \cdot \tilde{\bft}^\perp_{55}\ge 0\}$ can be written as
\beq
\bfy(\tilde{\bfx})= \bfR_1(-\gamma_1) \tilde{\bfx}, \label{eq:gamma1_con}
\eeq
which includes the folding angle $\gamma_1$.

\subsection{Compatibility condition and folding angle functions}
To ensure the continuity at the $\tilde{\bft}_1 \tilde{\bft}_5$ boundary, the deformation gradients need to satisfy the compatibility condition:
\beq
\bfR_2(\gamma_2) \bfR_3(\gamma_3) \bfR_4(\gamma_4)\tilde{\bft}_5 = \tilde{\bft}_1. \label{eq:compatibility_fourfold}
\eeq
The folding angles are restricted in the interval $-\pi \le \gamma_1, \gamma_2, \gamma_3, \gamma_4 \le \pi$.  In the rigidly and flat-foldable case, the four folding angles reach $\pm \pi$ simultaneously, with a reference free folding angle ranging from $-\pi$ to $\pi$. In the non-Euclidean case, some folding angles will reach $\pm \pi$ first, making the domain of the reference folding angle not a full interval of $[-\pi,\pi]$.

\subsubsection{Calculations of $\gamma_2$, $\gamma_3$ and $\gamma_4$}
To solve the compatibility equation, we first multiply (\ref{eq:compatibility_fourfold}) by $\bfR_2^T(\gamma_2)$ to have 
\beq
\bfR_3(\gamma_3) \bfR_4(\gamma_4)\tilde{\bft}_5 = \bfR_2^T(\gamma_2) \tilde{\bft}_1, \label{eq:gamma3_1}
\eeq
and then dot (\ref{eq:gamma3_1}) with $\tilde{\bft}_3$ to have a scalar equation:
\beq
\bfR_4(\gamma_4) \tilde{\bft}_5 \cdot \tilde{\bft}_3 - \bfR_2^T(\gamma_2) \tilde{\bft}_1 \cdot \tilde{\bft}_3 = 0
\eeq
(recalling $\bfR_3(\gamma_3) \tilde{\bft}_3 = \tilde{\bft}_3$). This explicitly gives 
\beqs
\cos(\gamma_4)&=&\frac{\cos \tilde{\alpha}_3 \cos\tilde{\alpha}_4 - \cos\tilde{\alpha}_1 \cos\tilde{\alpha}_2 + \sin\tilde{\alpha}_1 \sin\tilde{\alpha}_2 \cos{\gamma}_2}{\sin\tilde{\alpha}_3 \sin\tilde{\alpha}_4} \nonumber \\
&:=&F_4(\gamma_2; \tilde{\boldsymbol{\alpha}}), \label{eq:f4}
\eeqs
where we define $\tilde{\boldsymbol{\alpha}}:=(\tilde{\alpha}_1,\tilde{\alpha}_2,\tilde{\alpha}_3,\tilde{\alpha}_4)$. It is convenient to choose $\gamma_2$ as the reference folding angle, say, $\gamma_2 = \theta$. Then,
$\gamma_4$ as a function of $\theta$ is given in the condensed form
\beq \label{eq:gamma4}
\gamma_4 =  \bar{\gamma}_4^{\delta}( \theta) := \delta \arccos F_4(\theta; \tilde{\boldsymbol{\alpha}}).
\eeq
Here $\delta \in\{\pm\}$ denotes the two branches of solution as shown in Fig.~\ref{fig:fourfold}(b). Since the function $\bar{\gamma}_4^\delta(\theta)$ is even and monotonic in $[-\pi,0]$ and $[0,\pi]$, we may characterize the admissible domain of $\theta$ by $\calI^-\cup\calI^+ := [-\theta_{\max},-\theta_{\min}] \cup [\theta_{\min}, \theta_{\max}]$ in the following cases:
\begin{enumerate}
	\item $F_4(0;\tilde{\boldsymbol{\alpha}}) \in [-1, 1]$. For this case, $\theta_{\min} = 0$, and $\theta_{\max}$ is given by
	\begin{align}
	\theta_{\max}= \begin{cases}
	\pi,  &\text{if}~F_4(\pi;\tilde{\boldsymbol{\alpha}}) \geq -1 \\
	\arccos \left[\frac{\cos\tilde{\alpha}_1 \cos\tilde{\alpha}_2 - \cos(\tilde{\alpha}_3-\tilde{\alpha}_4)}{\sin\tilde{\alpha}_1\sin\tilde{\alpha}_2}\right], &\text{if}~F_4(\pi; \tilde{\boldsymbol{\alpha}})<-1
	\end{cases}.
	\end{align}
	
	\item $F_4(0; \tilde{\boldsymbol{\alpha}}) >1$ and $F_4(\pi;\tilde{\boldsymbol{\alpha}}) \leq 1$. For this case, by solving $F_4(\theta_{\min}; \tilde{\boldsymbol{\alpha}}) = 1$ and $F_4(\theta_{\max}; \tilde{\boldsymbol{\alpha}}) = -1$, we have 
	\begin{align}
	&\theta_{\min}= \arccos \left[\frac{\cos\tilde{\alpha}_1\cos\tilde{\alpha}_2 - \cos(\tilde{\alpha}_3+\tilde{\alpha}_4)}{\sin\tilde{\alpha}_1\sin\tilde{\alpha}_2}\right], \\
	&\theta_{\max}= \begin{cases}
	\pi, \quad &\text{if}~F_4(\pi;\tilde{\boldsymbol{\alpha}}) \geq -1 \\
	\arccos \left[\frac{\cos\tilde{\alpha}_1\cos-\tilde{\alpha}_2 - \cos(\tilde{\alpha}_3-\tilde{\alpha}_4)}{\sin\tilde{\alpha}_1\sin\tilde{\alpha}_2}\right], \quad  & \text{if}~F_4(\pi; \tilde{\boldsymbol{\alpha}})<-1
	\end{cases}.
	\end{align}
	\item Otherwise, if $F_4(0; \tilde{\boldsymbol{\alpha}})$ and $F_4(\pi; \tilde{\boldsymbol{\alpha}})$ are beyond the above range, the admissible set of $\theta$ is empty.
\end{enumerate}

Next, we dot (\ref{eq:compatibility_fourfold}) with $\bfR_2^T(\gamma_2)$ and then project onto the plane perpendicular to $\tilde{\bft}_3$ to have
\beq
\bfR_3(\gamma_3)((\bfI - \tilde{\bft}_3 \otimes \tilde{\bft}_3)\bfR_4(\bar{\gamma}_4^{\delta}(\gamma_2))\tilde{\bft}_5)=  (\bfI - \tilde{\bft}_3 \otimes \tilde{\bft}_3) \bfR_2^T(\gamma_2) \tilde{\bft}_1 .   \label{eq:projection}
\eeq
The norms of the projected vectors are the same:
\beq
|(\bfI - \tilde{\bft}_3 \otimes \tilde{\bft}_3)(\bfR_3(\gamma_3) \bfR_4(\bar{\gamma}_4^{\sigma}(\gamma_2))\tilde{\bft}_5| = |(\bfI - \tilde{\bft}_3 \otimes \tilde{\bft}_3)\bfR_2^T(\gamma_2) \tilde{\bft}_1 |,
\eeq
which is trivially true.
Thus, to solve the projected equation (\ref{eq:projection}), we  consider the non-degenerate case under $|(\bfI - \tilde{\bft}_3 \otimes \tilde{\bft}_3)\bfR_2^T(\gamma_2) \tilde{\bft}_1 | \neq 0$. 
For (\ref{eq:projection}) to hold, $\gamma_3 = \bar{\gamma}_3^{\delta}(\gamma_2)$ must satisfy 
\beqs
\cos(\bar{\gamma}_3^{\delta}(\gamma_2))&=&\frac{(\bfI - \tilde{\bft}_3 \otimes \tilde{\bft}_3)\bfR_4(\bar{\gamma}_4^{\delta}(\gamma_2))\tilde{\bft}_5 \cdot (\bfI - \tilde{\bft}_3 \otimes \tilde{\bft}_3)\bfR_2^T(\gamma_2)\tilde{\bft}_1}{|(\bfI - \tilde{\bft}_3 \otimes \tilde{\bft}_3)\bfR_4(\bar{\gamma}_4^{\delta}(\gamma_2))\tilde{\bft}_5 | | (\bfI - \tilde{\bft}_3 \otimes \tilde{\bft}_3)\bfR_2^T(\gamma_2)\tilde{\bft}_1|} \nonumber\\ &:=& F_{3c}^{\delta}(\gamma_2; \tilde{\boldsymbol{\alpha}}),  \label{eq:cosxi}\\
\sin(\bar{\gamma}_3^{\delta}(\gamma_2)) &=&\text{sign}\left[\tilde{\bft}_3 \cdot (\bfI - \tilde{\bft}_3 \otimes \tilde{\bft}_3)\bfR_4(\bar{\gamma}_4^{\delta}(\gamma_2))\tilde{\bft}_5 \times (\bfI - \tilde{\bft}_3 \otimes \tilde{\bft}_3)\bfR_2^T(\gamma_2)\tilde{\bft}_1 \right]\sqrt{1-\cos^2(\bar{\gamma}_3^{\delta}(\gamma_2))}  \nonumber \\
&:=&F_{3s}^{\delta}(\gamma_2; \tilde{\boldsymbol{\alpha}}) \label{eq:sinxi}.
\eeqs
Then the solution of $\gamma_3$ is given by
\beq \label{eq:gamma3}
\gamma_3 =\bar{\gamma}_3^{\delta}(\gamma_2)=\text{sign}[F_{3s}^{\delta}(\gamma_2; \tilde{\boldsymbol{\alpha}})] \arccos[F_{3c}^{\delta}(\gamma_2; \tilde{\boldsymbol{\alpha}})]. 
\eeq

\begin{figure}[!ht]
	\center
	\includegraphics[width=\textwidth]{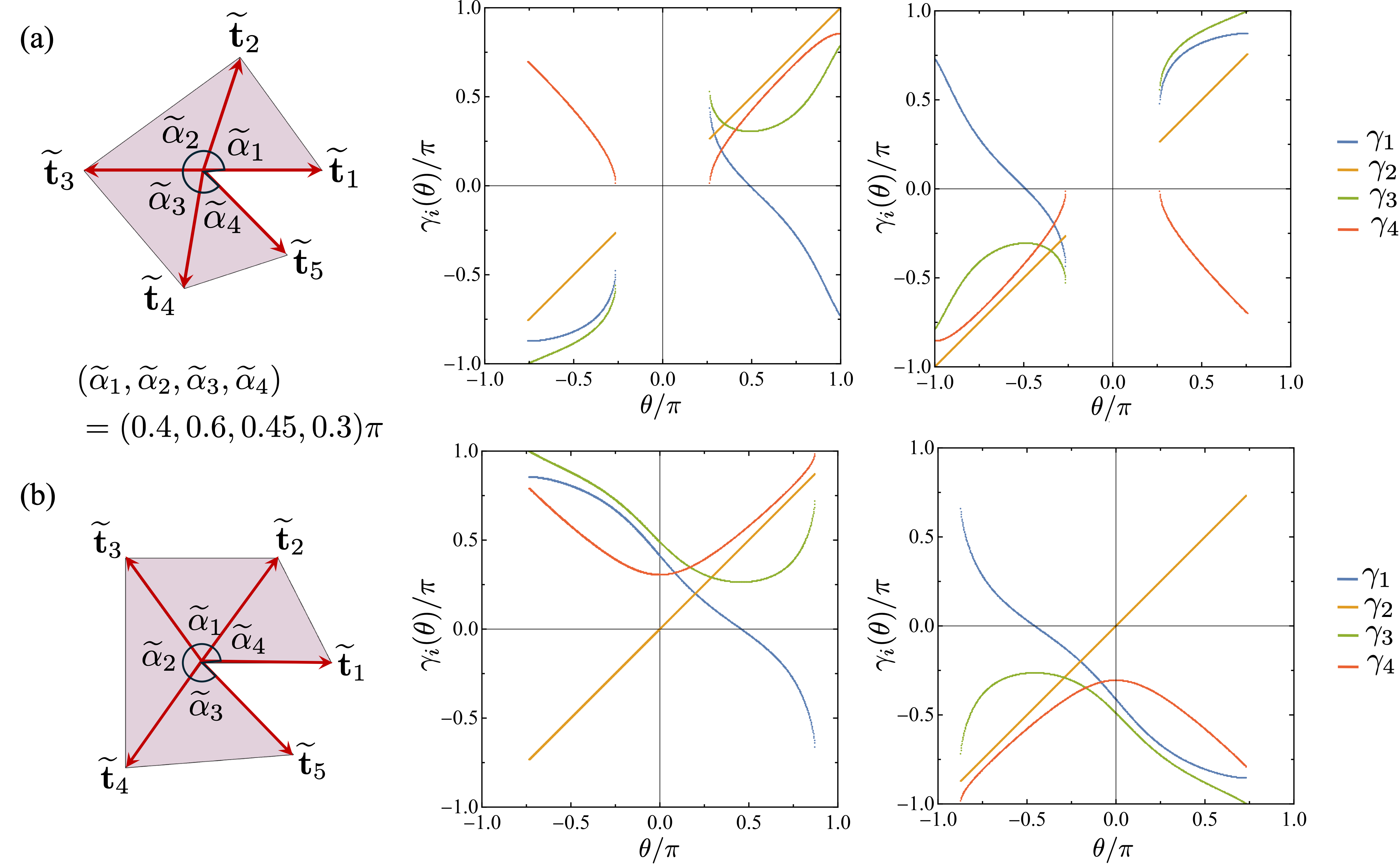}
	\caption{Two examples with the same set of sector angles. (a) The folding angle functions show discontinuities by choosing $\gamma_2$ as the reference folding angle. (b) The folding angle functions are continuous by cycling sector angles and then choosing $\gamma_2$ (which is $\gamma_1$ in (a)) as the reference folding angle.} \label{fig:reference}
\end{figure}
\subsubsection{Calculation of $\gamma_1$}
To solve $\gamma_1$, we use the second intermediate domain in Fig.~\ref{fig:fourfold}(a) such that there exists
$\gamma_1=\bar{\gamma}_1^\delta(\theta)$ making the folded configuration continuous across $\tilde{\bft}_4$ and $\tilde{\bft}_{55}$. A convenient way to obtain $\gamma_1$ is to use the cyclic symmetry of $\tilde{\alpha}_1, \tilde{\alpha}_2, \tilde{\alpha}_3$ and $\tilde{\alpha}_4$. Notice that $\gamma_4=\bar{\gamma}_4^\delta(\gamma_2)$ shows the ability to solve the folding angle ($\gamma_4$) in terms of its opposite folding angle ($\gamma_2$). Thus we may solve $\gamma_1$ in terms of its opposite folding angle $\gamma_3$, by simply cycling $\tilde{\alpha}_1, \tilde{\alpha}_2, \tilde{\alpha}_3, \tilde{\alpha}_4$  and then using the the same function form $F_4$ in (\ref{eq:f4}) as follows:
\beq
\gamma_1= \hat{\delta} \arccos F_4[ (\bar{\gamma_3}^\delta(\gamma_2; \tilde{\boldsymbol{\alpha}}); \hat{\boldsymbol{\alpha}}]:= \bar{\gamma}_1^{\hat{\delta}} (\gamma_2,\delta), \label{eq:gamma1}
\eeq
where $\hat{\boldsymbol{\alpha}} = (\tilde{\alpha}_2, \tilde{\alpha}_3, \tilde{\alpha}_4, \tilde{\alpha}_1)$. Substituting $\gamma_3$ as a function of $\gamma_2$, we can still use $\gamma_2$ as the reference folding angle in (\ref{eq:gamma1}). To determine the folding branch $\hat{\delta}$, 
 we perform the following continuity check for $\tilde{\bft}_4$ and $\tilde{\bft}_{55}$:
\begin{align}
\hat{\delta}(\theta) = \begin{cases}
+1,~ \text{if}~ \bfR_1(-\arccos F_4 (\bar{\gamma_3}^\delta(\theta; \tilde{\boldsymbol{\alpha}}); \hat{\boldsymbol{\alpha}})) \tilde{\bft}_{55} = \bfR_2 (\theta) \bfR_3(\bar{\gamma}_3^\delta(\theta)) \tilde{\bft}_4\\
-1,~ \text{if}~ \bfR_1(\arccos F_4 (\bar{\gamma_3}^\delta(\theta; \tilde{\boldsymbol{\alpha}}); \hat{\boldsymbol{\alpha}})) \tilde{\bft}_{55} = \bfR_2 (\theta) \bfR_3(\bar{\gamma}_3^\delta(\theta)) \tilde{\bft}_4
\end{cases}.
\end{align}
Then the deformation of the fourth domain is accordingly $\bfy(\tilde{\bfx})= \bfR_1(-\gamma_1) \tilde{\bfx}$.

\subsubsection{Admissible domain to avoid self-intersection}
In our previous analysis, we choose $\gamma_2$ as the reference folding angle, and the folding angle functions ensure the continuity of the deformation $\bfy(\tilde{\bfx})$. However, this analysis allows unphysical self-intersections, and one panel may be folded to pass through its adjacent panels. Thus, we need to determine the admissible domain of the reference folding angle. The domain $\calI^+ \cup \calI^-$ is obtained only from the existence of $\gamma_4$. Therefore, the admissible domain is a subset of $\calI^+ \cup \calI^-$ such that
 no self-intersection occurs at any crease.

Notice that the folding angle jumps between $-\pi$ and $\pi$ when the panel intersects with its adjacent panel. Thus, we perform the following check
\beq
|\gamma_i(\bar{\theta}^+) - \gamma_i(\bar{\theta}^-)| = 2 \pi
\eeq
to determine the reference folding angle $\bar{\theta}$ at which the folding angles $\gamma_i$ reaches $\pm \pi$. 
Let ${\cal J}^+ \subset \calI^+ = [\theta_{\min}, \theta_{\max}]$  be the set of points at which one of the folding angles jumps, say,
\beq
{\cal J}^+ :=\{\bar{\theta}\in\calI^+: |\gamma_i(\bar{\theta}^+) - \gamma_i(\bar{\theta}^-)| = 2 \pi, ~i=1,2,3 \text{ or }4\}.
\eeq
Define
\begin{align}
\bar{\theta}_{\max}:=\begin{cases}
\inf {\cal J}^+,\quad &\text{if}~{\cal J}^+ \neq \emptyset \\
\theta_{\max},\quad &\text{if}~{\cal J}^+ = \emptyset
\end{cases}
\end{align}
as the upper bound of the physically relevant reference folding angle (i.e., no self-intersection) on $\calI^+$. Similarly, the lower bound $\bar{\theta}_{\min}$ on $\calI^-=[-\theta_{\max},-\theta_{\min}]$ is defined as 
\begin{align}
\bar{\theta}_{\min}:=\begin{cases}
\sup {\cal J}^-,\quad &\text{if}~{\cal J}^- \neq \emptyset \\
-\theta_{\max},\quad &\text{if}~{\cal J}^- = \emptyset
\end{cases}
\end{align}
where ${\cal J}^-$ is given by
\beq
{\cal J}^- :=\{\bar{\theta} \in \calI^-: |\gamma_i(\bar{\theta}^+) - \gamma_i(\bar{\theta}^-)| = 2 \pi, ~i=1,2,3~\text{or}~4\}.
\eeq
Then the admissible domain of the reference folding angle is $\bar{\calI}^+\cup \bar{\calI}^-$, where 
\beq
\bar{\calI}^+ = [\theta_{\min}, \bar{\theta}_{\max}],\quad \bar{\calI}^- = [\bar{\theta}_{\min}, - \theta_{\min}].
\eeq
In reality, the domain can be obtained numerically by simply performing the above check, resulting in the folding angle curves in Fig.~\ref{fig:reference} that do not exceed $\pm \pi$.

In summary, the folding angle functions that make the folded configuration continuous with no self-intersections are given by  
\beq
\gamma_1=\bar{\gamma}_{1}^{\tilde{\delta}(\theta)}(\theta,\delta),~\gamma_2=\theta,~\gamma_3 =\bar{\gamma}_3^{\delta}(\theta),~\gamma_4 = \bar{\gamma}_4^\delta(\theta),~ \delta\in\{\pm 1\},~ \theta \in \bar{\calI}^-\cup \bar{\calI}^+. \label{eq:gamma1234}
\eeq
The Mathematica code for simulating the deformation of a general non-Euclidean four-fold origami is available upon request.

{\bf Remark 1: Symmetry relation.} By checking the folding angle functions, we claim that if $(\gamma_1, \gamma_2, \gamma_3, \gamma_4)$ is on the branch $\delta=+$, then $(-\gamma_1, -\gamma_2, -\gamma_3, -\gamma_4)$ is on the other branch $\delta=-$, as shown in Fig.~\ref{fig:reference}. These two configurations are mirror related.

{\bf Remark 2: Choices of different reference folding angles.} It is noticed that, for the same crease pattern, choosing the appropriate reference folding angle may result in continuous folding angle functions (Fig.~\ref{fig:reference}(b)) rather than discontinuous ones (Fig.~\ref{fig:reference}(a)). Using the notations in Fig.~\ref{fig:reference}(a), it is numerically verified that the folding angle functions are continuous in terms of $\gamma_2$ if the sector angles satisfy
\begin{align}
    \begin{cases}
        \tilde{\alpha}_1 +  \tilde{\alpha}_2 \leq  \tilde{\alpha}_3 +  \tilde{\alpha}_4 \\
         \tilde{\alpha}_1 + \tilde{\alpha}_2 + \tilde{\alpha}_3 + \tilde{\alpha}_4  \leq 2\pi
    \end{cases} \quad \text{or} \quad 
    \begin{cases}
        \tilde{\alpha}_1 +  \tilde{\alpha}_2 \geq  \tilde{\alpha}_3 +  \tilde{\alpha}_4 \\
         \tilde{\alpha}_1 + \tilde{\alpha}_2 + \tilde{\alpha}_3 + \tilde{\alpha}_4  \geq 2\pi
    \end{cases} .
\end{align}

{
\subsection{Connection to director fields}
As discussed in Section \ref{sec:nfold}, for the four-fold intersection, we only need to consider the degenerate case satisfying $\alpha_1+\alpha_3=\pi$ (or equivalently $\alpha_2+\alpha_4=\pi$). The example in Fig.~\ref{fig:fourfold_pattern} also satisfies this condition. Assuming $\bfn_1= \bfR_{\bfe_3}(\theta) \bfe_1$, we compute the director at each panel, and then
 calculate the deformed sector angles by $\tilde{\alpha}_i = \arccos[(\bfU_{\bfn_i}\bft_i \cdot \bfU_{\bfn_i}\bft_{i+1})/(|\bfU_{\bfn_i}\bft_i| |\bfU_{\bfn_i}\bft_{i+1}|)]$. Substituting these deformed sector angles into the folding angle functions yields the kinematics of the four-fold origami directly from the director pattern. 

 Suppose the reference sector angles are given by $(\alpha_1,\alpha_2,\alpha_3,\alpha_4) = (\alpha_1,\alpha_2,\pi-\alpha_1,\pi-\alpha_2)$, $\bfn_1= \bfR_{\bfe_3} (\theta) \bfe_1$ and $\bft_1=\bfe_1$. By using $\bfn_{i+1} = \bfn_i - 2 (\bfn_i \cdot \bft_{i+1}) \bft_{i+1}$, we compute the directors $\bfn_i = \bfR_{\bfe_3}(\theta_i)\bfe_1$ as 
 \beq
\theta_1=\theta,~\theta_2= 2\alpha_1 + 2\alpha_2-\theta+\pi,~ \theta_3 = \theta - 2\beta_1,~ \theta_4 = 2\beta_1 - \theta + \pi.
 \eeq
We then substitute the directors and the crease tangents to obtain the deformed sector angles $\tilde{\alpha}_i$. A further calculation of Eq.~(\ref{eq:f4}) yields (numerically validated)
\beqs
\cos(\gamma_4) &=& \frac{
\cos(\gamma_2)\!\left[1 + r^2 + (-1 + r^2)\cos(4\alpha_1 - 2\theta)\right]
 - 2(-1 + r^2)\sin(\alpha_1 - \alpha_2)\sin(3\alpha_1 + \alpha_2 - 2\theta)
}{
1 + r^2 + (-1 + r^2)\cos\!\big[2(\alpha_1 + \alpha_2 - \theta)\big]
}\nonumber\\
&:=& \tilde{F}_4(\gamma_2; \boldsymbol{\alpha},\theta) \label{eq:cosgamma4ref}
\eeqs
where $\boldsymbol{\alpha}=(\alpha_1,\alpha_2,\alpha_3,\alpha_4)=(\alpha_1,\alpha_2,\pi-\alpha_1,\pi-\alpha_2)$ and $r=\lambda_{\parallel}/\lambda_{\perp}$. 
Equation~(\ref{eq:cosgamma4ref}) gives the relationship between the opposite folding angles $\gamma_4$ and $\gamma_2$ in terms of the directors and creases (determined by $\theta$ and $\alpha_i$). Unfortunately, there is no such neat formula for the adjacent folding angles. But we can always obtain the deformed sector angles $\tilde{\alpha}_i$ in terms of $\theta$ and $\alpha_i$, and then compute the kinematics, because all the formulae including the folding angle functions (\ref{eq:sinxi}) and (\ref{eq:f4}) are explicit. The explicit forms of these functions provide the foundation for efficient numerical algorithms used to pattern the complex director fields in the next section.

}

\section{Patterning the three-fold and four-fold intersections}\label{sec:pattern}
It is important to pattern the three-fold and four-fold intersections into active metamaterials, relying on the kinematics of single nodes. 
The active metamaterials are designed to be flat before actuation, and will actuate into an effective plate/shell with concentrated Gaussian curvature at nodes after actuation. Here, we provide two systematic designs: quadrilateral director patterns consisting of four-fold nodes and director patterns consisting of both three-fold and four-fold nodes.

\subsection{Quadrilateral director patterns}
\begin{figure}[!ht]
    \centering
    \includegraphics[width=\linewidth]{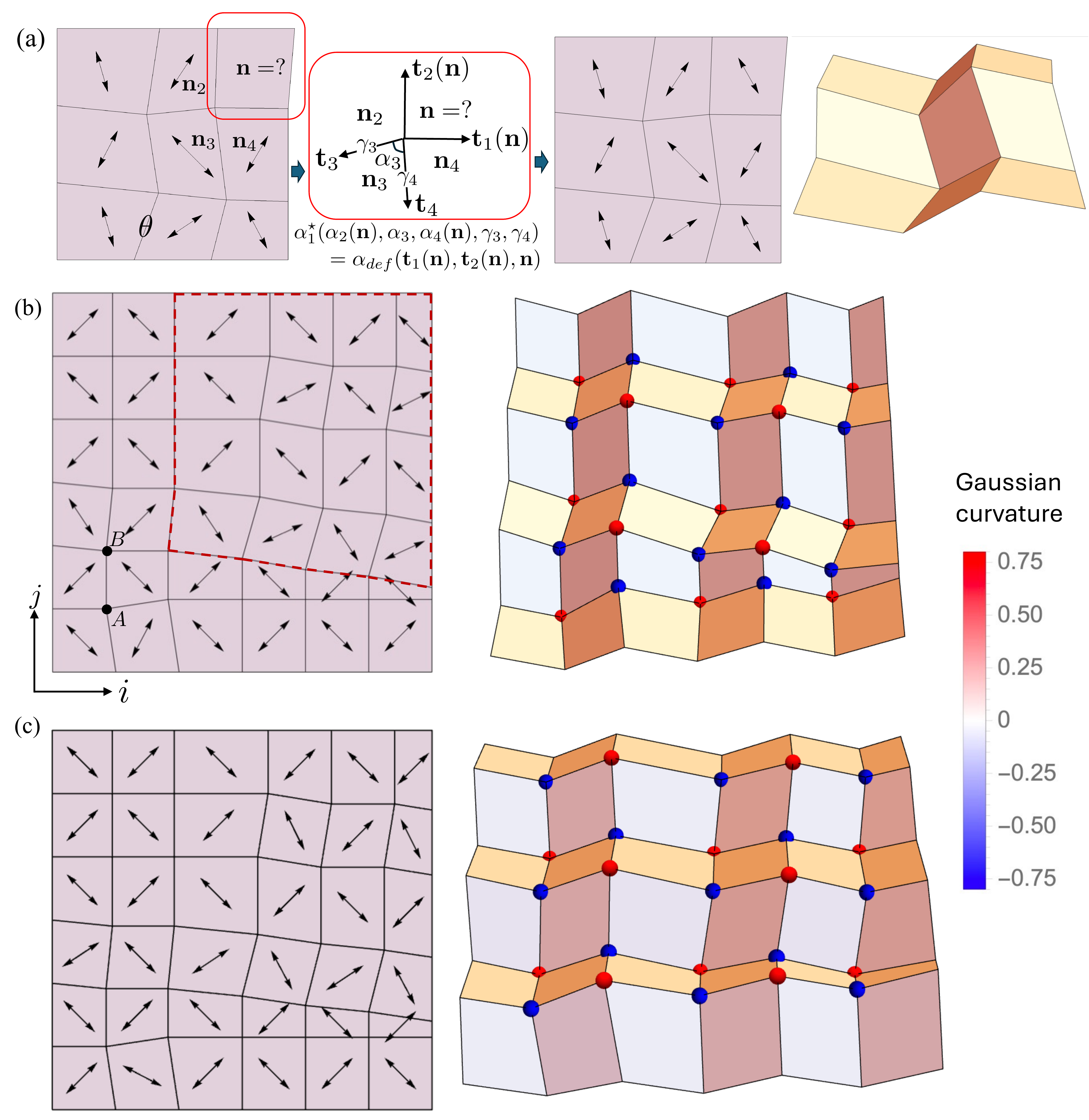}
    \caption{(a) The director on the $(3,3)$ panel (in the red frame) is determined by the other directors and the folding angles $\gamma_3,\gamma_4$. (b) A marching algorithm: the director pattern within the dashed red frame is determined by the boundary directors and the reference folding angle, following (a). (c) The orthogonal dual of (b). The patterns are actuated by $\lambda_{\parallel}=0.9, \lambda_{\perp}=1.1$. The blue and red spheres at the vertices represent negative and positive Gaussian curvature.}
    \label{fig:fourfold_pattern}
\end{figure}
The first director pattern we consider is a quadrilateral pattern consisting of four-fold intersections. The starting point is a $3\times 3$ pattern with an unknown director at the $(3,3)$ panel (see Fig.~\ref{fig:fourfold_pattern}(a)). The other directors, creases and folding branches are prescribed. Suppose a reference folding angle $\theta$ is given at a crease of the $(1,1)$ panel.  Due to the one-DoF feature, the folding angles $\gamma_3, \gamma_4$ are determined on the $(3,3)$ panel, as shown in the red frame. Meanwhile, to satisfy the metric compatibility, the tangents $\bft_1(\bfn), \bft_2(\bfn)$ and the sector angles on the $(3,3)$ panel are functions of the unknown director $\bfn$. We will show that $\bfn$ is uniquely determined.

Notice that there are two ways to compute the actuated sector angle $\tilde{\alpha}_1$ for the $(3,3)$ panel (see Fig.~\ref{fig:fourfold_pattern}(a)). On the one hand, the sector angle ${\alpha}_3$ is prescribed. $\alpha_2(\bfn)$and ${\alpha}_4(\bfn)$
are functions of $\bfn$.
Then we can compute the actuated sector angles $\tilde{\alpha}_3$, $\tilde{\alpha}_2(\bfn)$ and $\tilde{\alpha}_4(\bfn)$. Together with the folding angles $\gamma_3$ and $\gamma_4$, the directions of the actuated tangents $\tilde{\bft}_1(\bfn)$ and $\tilde{\bft}_2(\bfn)$ can be calculated. By direct calculation, the angle $\tilde{\alpha}_1$ between $\tilde{\bft}_1(\bfn)$ and $\tilde{\bft}_2(\bfn)$ is 
\begin{align}
\alpha_1^\star({\alpha}_2(\bfn), {\alpha}_3, {\alpha}_4(\bfn), \gamma_3, \gamma_4)  = \arccos[ 
\cos\tilde{\alpha}_2(\bfn) \left( \cos\tilde{\alpha}_3\cos\tilde{\alpha}_4(\bfn) - \cos\gamma_4\sin\tilde{\alpha}_3 \sin\tilde{\alpha}_4(\bfn) \right) \nonumber\\
 - \sin\tilde{\alpha}_2(\bfn) ( 
  \cos\gamma_3 \sin\tilde{\alpha}_3 \cos\tilde{\alpha}_4(\bfn) +  \cos\gamma_3 \cos\gamma_4 \cos\tilde{\alpha}_3 \sin\tilde{\alpha}_4(\bfn) - \sin\tilde{\alpha}_4(\bfn)\sin\gamma_3\sin\gamma_4 
) 
],
\end{align}
where $\tilde{\alpha}_i$ is the actuated sector angle corresponding to $\alpha_i$. On the other hand, the actuated sector angle $\tilde{\alpha}_1$ can be directly calculated from the $(3,3)$ panel, given by
 \beq
\alpha_{def} ({\bft}_1(\bfn),{\bft}_2(\bfn), \bfn) = \arccos\left[\frac{\bfU_{\bfn} \bft_1(\bfn) \cdot \bfU_{\bfn} \bft_2(\bfn)}{|\bfU_{\bfn} \bft_1(\bfn)||\bfU_{\bfn} \bft_2(\bfn)|}\right].
 \eeq
  Solving the equation numerically
\beq
\alpha_1^\star({\alpha}_2(\bfn), {\alpha}_3, {\alpha}_4(\bfn), \gamma_3, \gamma_4) = \alpha_{def} ({\bft}_1(\bfn),{\bft}_2(\bfn), \bfn) \label{eq:marching}
\eeq
yields a (generically unique) solution for $\bfn$, which can be used to develop a marching algorithm for the entire pattern.

To design a quadrilateral pattern, we prescribe the directors and the corresponding tangents at the left and bottom boundaries (panels with indices $(i,1),(i,2),(1,j),(2,j)$, i.e. the domain outside the red dashed lines in Fig.~\ref{fig:fourfold_pattern}(b)). By entering a reference folding angle $\theta$ and applying Eq.~(\ref{eq:marching}), the pattern within the red dashed domain is determined. Once all directors are determined, we fix the tangent branches and compute the tangents. The coordinates of the vertices are then computed by these tangents. 
The difficulty we might have is that the process may break the topology of a quadrilateral mesh. Thus, a more realistic way is to start with a symmetric pattern and then perturb the boundary directors.
Figure~\ref{fig:fourfold_pattern}(b) is an example of irregular quadrilateral patterns. We start with a symmetric pattern with directors along $\bfR_{\bfe_3}(\pi/4) \bfe_1$ and $\bfR_{\bfe_3}(3\pi/4) \bfe_1$. As shown in Fig.~\ref{fig:fourfold_pattern}(b), the symmetric pattern consists of Type A nodes and Type B nodes, which will deform into pyramids and saddles, respectively. We then
perturb the directors at the boundary vertices and compute the entire pattern using Eq.~(\ref{eq:marching}). 
During the marching process, the folding angle branches are fixed by $\delta=+1$ in Eq.~(\ref{eq:gamma1234}), meaning the opposite folding angles have the same sign. The resulting quadrilateral pattern is irregular in both the reference and actuated configurations, and the distribution of Gaussian curvature is non-symmetric. In the deformed configuration, we observe alternating positive and negative Gaussian curvature at the four-fold vertices. We also show the actuated configuration and Gaussian curvature of the orthogonal dual in Fig.~\ref{fig:fourfold_pattern}(c).

{
We may wonder if the structures in Fig.~\ref{fig:fourfold_pattern} are foldable or not.
If we fix $\lambda_{\parallel}=0.9, \lambda_{\perp}=1.1$, these actuated configurations are rigid, because numerically we find that a different reference folding angle leads to a different reference director pattern (except for periodic cases in Fig.~\ref{fig:meta}). However, if we fix the reference folding angle and the reference director pattern, then gradually develop the stretches $\lambda_{\parallel} = \lambda, \lambda_{\perp}=\lambda^{-0.5}$ by decreasing $\lambda$ from 0.95 to 0.65, we observe a continuous compatible evolution (Fig.~\ref{fig:sigma}(a)). This observation relies on the fact that, if we assume the reference pattern is compatible and there exists a compatible deformed configuration for $\lambda_{\parallel, \perp}$,  we can always find a compatible deformed configuration for other $\lambda_{\parallel, \perp}$. Though not rigorously proved, the conjecture is likely to be true based on numerical observations, as also shown in Figs. \ref{fig:meta}(d) and \ref{fig:pattern_34}(d).

The above designs have $\delta=+1$ for the folding branch. Here we provide another example with a different folding branch $\delta$ to show the versatility of our design strategy. The two deformed configurations of the example in Fig.~\ref{fig:sigma}(b) share the same reference director pattern, actuation stretches $\lambda_{\parallel}=0.95, \lambda_{\perp}=1.05$, reference folding angle $\eta=0.8$, but have different folding branches $\delta$. 
The right deformed configuration has all $\delta = +1$, exhibiting shapes similar to those shown in Fig.~\ref{fig:fourfold_pattern}. For the left deformed configuration, we switch the folding branch from $\delta = +1$ to $\delta = -1$ along a row of vertices, leading to sharp geometric changes. Since self-intersection easily occurs in this case, only a $3\times3$ pattern is shown here. It should be noted that, unlike the rigidly and flat-foldable quadrilateral mesh origami in our previous work \cite{feng2020designs}, we have no explicit formula for $\delta$ of the forth node as a function of the geometries and folding branches of the first three, by solving the compatibility equation in Fig.~\ref{fig:fourfold_pattern}(a). However, it is numerically verified that we can always find a solution of $\delta$ for the fourth node as shown in Fig.~\ref{fig:sigma}(b), given the folding branches of the first three ($\delta=+1$ or $\delta=-1$).

\begin{figure}[!ht]
	\center
\includegraphics[width=\textwidth]{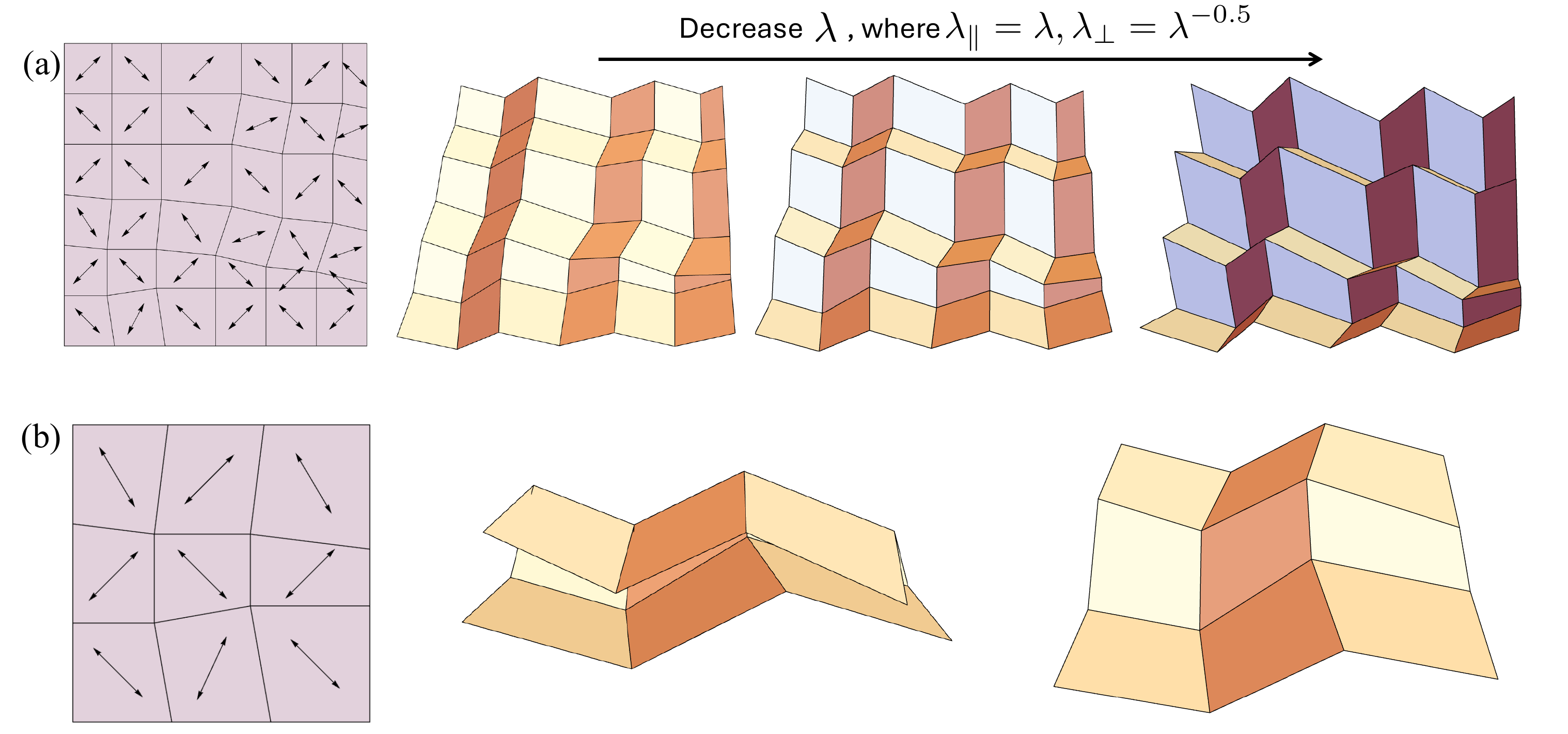}
	\caption{{(a) The evolution of the same director pattern by decreasing $\lambda$ from 0.95 to 0.65 where $\lambda_{\parallel} = \lambda, \lambda_{\perp}=\lambda^{-0.5}$.(b) The two deformed configurations share the same reference director pattern, actuation stretches $\lambda_{\parallel}=0.95, \lambda_{\perp}=1.05$, reference folding angle $\eta=0.8$, but have
    different folding branches $\delta$.}
    } \label{fig:sigma}
\end{figure}

}

{
\subsection{Periodic quadrilateral pattern and active metamaterial}
Harnessing the unique properties of non-Euclidean origami, we may design active metamaterials \cite{xiao2020active,dudek2025shape} that are responsive to external stimuli. A convenient strategy is to use periodic patterns, as shown in Fig.~\ref{fig:meta}. The reference domain is a periodic square pattern, with directors pointing along $\pm\pi/4$ at each panel (Fig.~\ref{fig:meta}(a)). We first study the motion of a single pattern and then try to construct a metamaterial based on it in 3D.  We also notice that the evolved structure is exactly the eggbox origami studied by Paulino in \cite{pratapa2019geometric}. 
\begin{figure}[!ht]
	\center
\includegraphics[width=\textwidth]{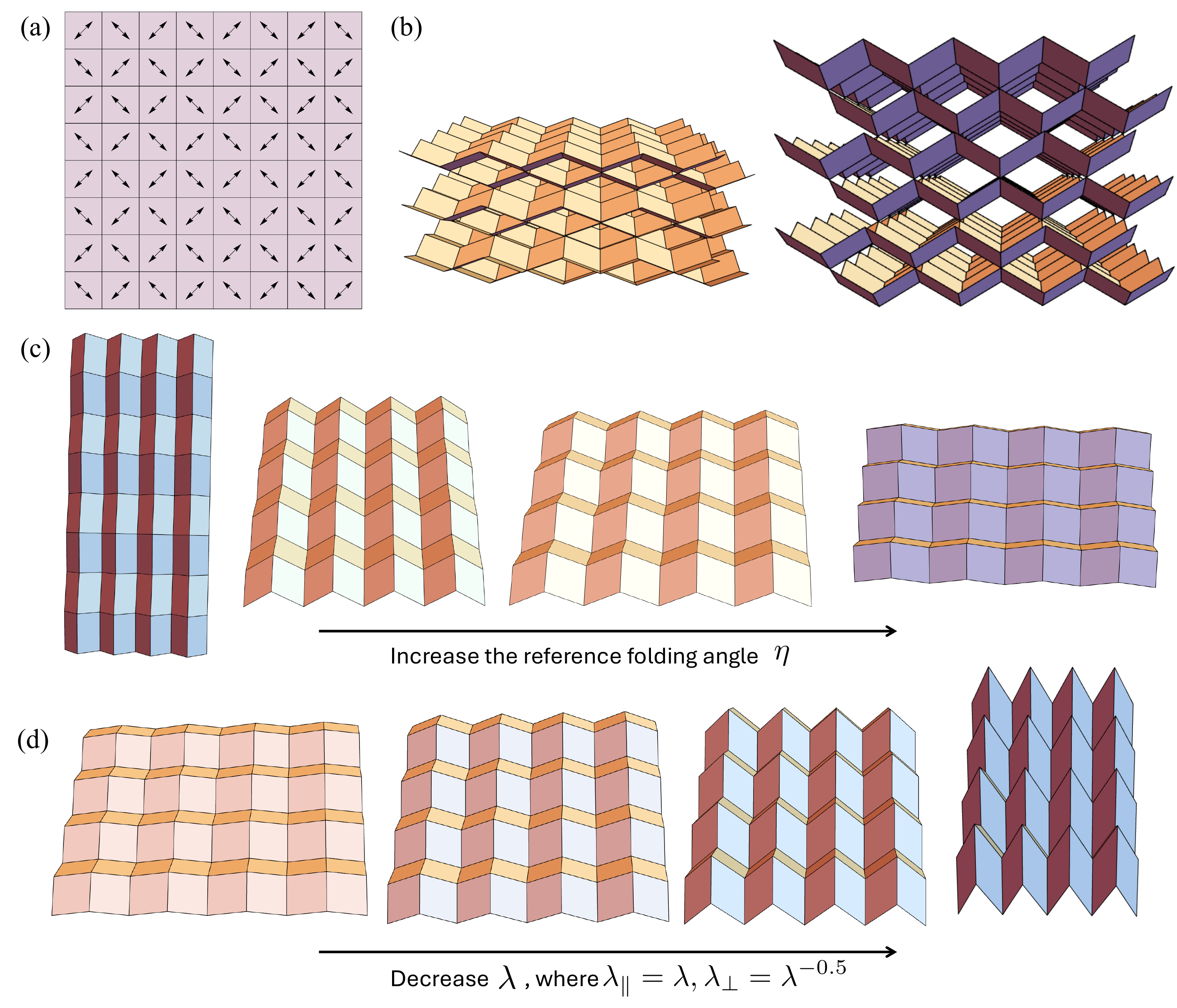}
	\caption{(a) A periodic reference director pattern. (b) Metamaterials constructed by periodic non-Euclidean origami. The joined vertices remain connected during motion. (c)-(d) Two deformation modes regarding rigid folding and active evolution respectively. (c) Fix the stretches $\lambda_{\parallel}, \lambda_{\perp}$ and change the reference folding angle $\eta$. (d) Fix the reference folding angle $\eta$ and change the stretches $\lambda_{\parallel} = \lambda, \lambda_{\perp}=\lambda^{-0.5}$ from $\lambda=0.95$ to $\lambda=0.5$.
    } \label{fig:meta}
\end{figure}

The non-Euclidean origami has two modes of motion. First we assume the director pattern (each panel) is fully evolved by $\lambda_{\parallel}=0.9$ and $\lambda_{\perp}=1.1$ in Fig.~\ref{fig:meta}(c), and then change the folding angle $\eta$ from $0.3$ to $1.5$. Owing to the symmetry, the origami is rigidly foldable, exhibiting an overall shape change from rectangle to square and to rectangle shape as shown in Fig.~\ref{fig:meta}(c). The other mode is that we can fix the folding angle $\eta=1$ and evolve the stretches $\lambda_{\parallel} = \lambda, \lambda_{\perp}=\lambda^{-0.5}$ from $\lambda=0.95$ to $\lambda=0.5$. Unlike the first mode, the overall pattern evolves through large-square, small-square, and rectangular shapes.
The second mode mimics the shape evolution during the phase transformation of liquid crystal elastomer sheets.

Since the origami pattern remains compatible and periodic during the two modes of motion, we could stack the actuated pattern in multilayers by joining corresponding nodes periodically (Fig.~\ref{fig:meta}(b)). The stacked non-Euclidean origami forms an active metamaterial that can undergo shape change by folding it (mode 1) and by stimulating it (mode 2). The latter would equip our designed active metamaterial with better properties compared with classic passive metamaterial, since many physical properties related to shapes (e.g. band gaps, damping, Poisson's ratio) can be modulated upon external stimuli in our case. A detailed study of the active metamaterial is out of the scope of this paper, and we reserve it for future work.

}
\subsection{Director patterns combining three-fold and four-fold intersections}
\begin{figure}[!ht]
    \centering
    \includegraphics[width=\linewidth]{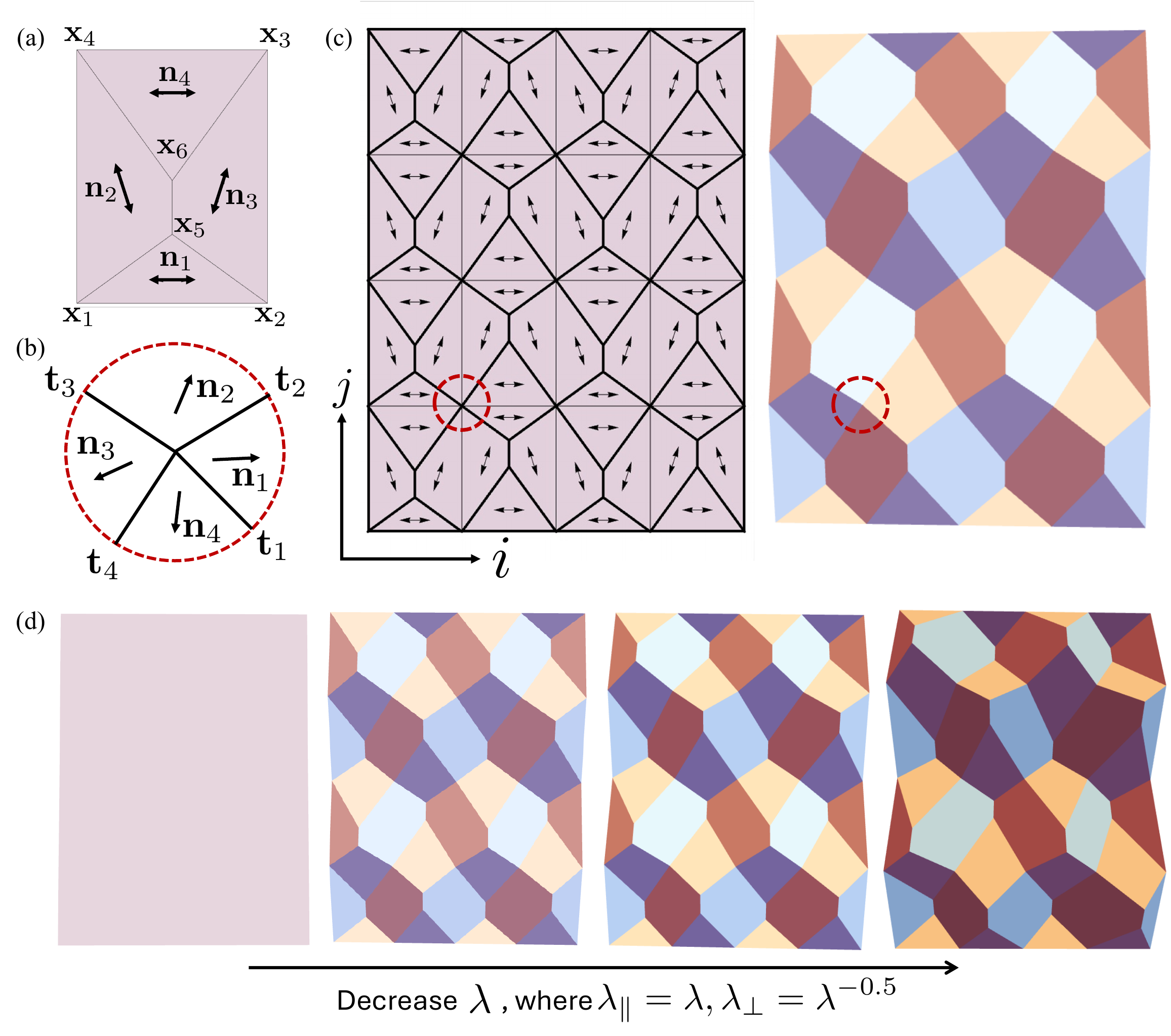}
    \caption{(a) Notation of the unit cell. (b) Notation of the four-fold intersection in the pattern. (c) The designed pattern composed of three-fold and four-fold intersections. The pattern is compatible in both the reference and actuated configurations. (d) The evolution of the structure by changing the stretches $\lambda_{\parallel} = \lambda, \lambda_{\perp}=\lambda^{-0.5}$ from $\lambda=1$ to $\lambda=0.5$.}
    \label{fig:pattern_34}
\end{figure}

The second pattern we design is a combination of three-fold and four-fold vertices as shown in Fig.~\ref{fig:pattern_34}. The unit cell is composed of two connected three-fold intersections inherited from the pattern in Fig.~\ref{fig:angle_change}, with the notation shown in Fig.~\ref{fig:pattern_34}(a). The directors on
adjacent unit cells satisfy $\bfn_3(i,j) = \bfn_2(i+1,j)$ and $\bfn_4(i,j)=\bfn_1(i,j+1)$. Then the boundaries of unit cells in Fig.~\ref{fig:pattern_34}(c) (i.e. thin lines) are not folding lines, and the two adjacent panels on two unit cells form a single panel. 
The actual folding lines in Fig.~\ref{fig:pattern_34}(c) are thick lines.
We then deform the reference pattern by popping the unit cells up or down, indicated by $\delta=+1$ or $\delta=-1$ respectively. It is noted that if the distribution of $\delta$ satisfies 
\begin{align}
    \begin{cases}
        \delta(i,j) = +1, \quad \text{} i+j \text{ odd} \\
        \delta(i,j) = -1, \quad \text{} i+j \text{ even} 
    \end{cases},
\end{align}
the deformed pattern is compatible, in particular at the four-fold intersection. 
Figure~\ref{fig:pattern_34}(b) shows a generic four-fold intersection in the pattern. We will show that the deformed four-fold intersection is compatible if the folding angles are computed from the three-fold intersections.
For an arbitrary four-fold intersection in the pattern (Fig.~\ref{fig:pattern_34}(b)), we assume the directors are given by $\bfn_i = \bfR_{\bfe_3}(\theta_i)\bfe_1, 0<\theta_1<\theta_2<\theta_3<\theta_4<2\pi$, which satisfy the topological condition of creases. We compute the folding angle $\gamma_i$ using Eq.~(\ref{eq:thm_cosgamma}) from the three-fold kinematics
	\beq
	\gamma_i =(-1)^i\arccos\left[r^2 + (-1 + r^2) \cos(\theta_i - \theta_{i-1})\right], ~i=1,2,3,4.
	\eeq
Here we have already substituted the tangent branch $\hat{\sigma}_i$ for this particular pattern and $(-1)^i$ accounts for the pop-up or pop-down of the three-fold nodes.
That is, the folding angles are inherited from the three-fold intersections, and the folds of the actuated four-fold nodes have alternating mountains and valleys folding lines.
Then we perform the numerical validation for the compatibility equation at the four-fold vertex:
\beq
error = \max_{0<\theta_1<\theta_2<\theta_3<\theta_4<2\pi} |\bfR_2(\gamma_2) \bfR_3(\gamma_3) \bfR_4(\gamma_4) \tilde{\bft}_5 - \tilde{\bft}_1|
\eeq
for any $\theta_i$ satisfying the topological condition $\bft_i \times \bft_{i+1} \cdot \bfe_3 = 1$,
where $\bfR_i$ is a rotation tensor with rotation axis $\tilde{\bft}_i$ for the intermediate configuration. 
The numerical result given by the global minimization NMaximize in Mathematica shows that $error=0$ in the domain $0<\theta_1<\theta_2<\theta_3<\theta_4<2\pi$, indicating that the four-fold vertex is compatible in our setting. Since compatibility holds for arbitrary four-fold nodes, we can successively stack different unit cells along the i- and j-directions to construct both the reference pattern and the actuated configuration, as shown in Fig.~\ref{fig:pattern_34}(c). In the actuated state, the configuration exhibits positive Gaussian curvature at the three-fold vertices and negative Gaussian curvature at the four-fold vertices. {Notably, the design space for the director fields is quite large; the primary constraint is the preservation of the pattern’s topological structure. We also notice that the actuated configuration can be evolved in a compatible way by reducing $\lambda=1$ to $\lambda=0.5$ in the stretches $\lambda_{\parallel} = \lambda, \lambda_{\perp}=\lambda^{-0.5}$,  as discussed in the previous section.}

\section{Discussion} \label{sec:discussion}
In this work, we have presented a geometric framework for designing complex active origami patterns, where each panel is equipped with a prescribed metric change. This metric change, induced by local actuation, leads to nontrivial Gaussian curvature at the vertices, rendering the origami non-Euclidean. Such systems can be physically realized using two-dimensional liquid crystal elastomer (LCE) sheets, where a patterned director field encodes the spatially varying metric change. By assuming that each panel remains flat and rigid after actuation, and by neglecting crease energy, we reduce the design challenge to a purely geometric problem: how to construct a piecewise constant director field that ensures compatibility in both the reference and actuated configurations.

To tackle this problem, we first analyze the fundamental condition of metric compatibility between adjacent director fields. { 
We then study the compatible director fields for a general $n$-fold vertex, characterized into the generic case and the degenerate case. We prove that for the generic case there are two and only two compatible director fields which are orthogonal to each other (orthogonal duals), while for the degenerate case, there is a continuous family of compatible director fields. The Gaussian curvature of the actuated vertex is also computed.}
We develop a rigorous continuum mechanics framework to study the kinematics of isolated three-fold and four-fold vertices. Building on this vertex-level understanding, we construct two classes of large-scale director patterns that are compatible both before and after actuation. In other words, the directors satisfy metric compatibility at each crease, and the resulting global deformation is geometrically compatible.
Although the general compatibility problem remains open ({for example, a three-fold compatible director pattern that develops negative Gaussian curvature is always incompatible in the actuated configuration}), our two proposed designs demonstrate broad and flexible design spaces. The first is a quadrilateral director pattern with free directors on the two boundaries. Using a marching algorithm and a rigidity theorem, the director information propagates to interior nodes, producing a compatible actuated configuration characterized by alternating positive and negative Gaussian curvature at the vertices. The second design combines three-fold and four-fold intersections, leveraging a folding angle theorem that shows the folding angle at a three-fold node depends solely on its adjacent directors. In this configuration, positive Gaussian curvature arises at the three-fold vertices and negative curvature at the four-fold vertices after actuation. {On the application side, we propose a design strategy for active metamaterials based on periodic quadrilateral director patterns, which reveals two modes of motion by folding and stimulating.}

Looking ahead, several promising directions merit further exploration. On the mechanics side, one could investigate the effective behavior of these patterns under external loads. For instance, drawing inspiration from effective plate theories developed for origami based on bar-and-hinge models \cite{XU2024105832,xu2025modelingcomputationeffectiveelastic,filipov2017bar}, it is natural to ask whether an analogous plate theory for non-Euclidean origami can be formulated, and what  the embedded Gaussian curvature would add to the system.
On the application side, our designs could serve as building blocks for energy-absorbing layers, deployable structures, or load-bearing elements, where the induced Gaussian curvature provides additional strength and stiffness, as seen in recent studies of non-Euclidean curved-fold actuators \cite{duffy2021shape,feng2022interfacial,feng2024geometry,zhai2020situ}.
Finally, from a geometric perspective, the complete solution of the general compatibility problem—determining whether a given compatible piecewise-constant director field leads to compatible actuated configurations—remains open. Addressing this challenge may require an augmented non-isometric, piecewise-constant analog of the Nash embedding theorem \cite{nash1954c} or a non-isometric extension of the crumpled paper theory \cite{conti2008confining}.

\section*{Acknowledgments} 
This work is based on discussions with Prof. Paul Plucinsky at USC, while we were working together at the University of Minnesota. We also thank Mr. Xi SU from PKU for providing the Householder matrix approach for the proof of Theorm 3.1.
F.F. acknowledges the financial support from the National Natural Science Foundation of China (Grant No. 12472061).
\bibliographystyle{unsrt}
\bibliography{origami}

\end{document}